\documentclass[12 pt]{extarticle}
\usepackage{amsmath}
\usepackage{amssymb}
\usepackage[a4paper, total={6in, 9in}]{geometry}
\usepackage{amsthm,url,tikz}
\usepackage[utf8]{inputenc}
\usepackage[english]{babel}
\usepackage{graphicx}

\usepackage{makecell}

\usepackage[pagewise]{lineno}%\linenumbers

\usepackage{cases} 
\usepackage{enumerate}
%usepackage[notref,notcite]{showkeys}
\usepackage{hyperref}
\usepackage{adjustbox}
\numberwithin{equation}{section}
\usepackage{makecell}
\usepackage{bm}
\usepackage{xcolor}

\newtheorem{theorem}{Theorem}[section]

\newtheorem{corollary}[theorem]{Corollary}
\newtheorem{lemma}[theorem]{Lemma}

\theoremstyle{definition} 
 
\newtheorem{remark}{Remark}[section] 
\newtheorem{example}{Example}[section]

\usepackage[nottoc,notlot,notlof]{tocbibind} 
%to add referece in content
%\theoremstyle{remark}
%\newtheorem*{rem}{Remark}

\title{Linear Complementary Pairs of Quasi-Cyclic and Quasi-Twisted Codes}
\author{Kanat Abdukhalikov \\	
Department of Mathematical Sciences, \\
UAE University, PO Box 15551, Al Ain, UAE\\
Email: abdukhalik@uaeu.ac.ae \bigskip  \\  
Duy Ho \\
Department of Mathematics and Statistics, \\
UiT The Arctic University of Norway, Tromsø
9037, Norway\\
Email: duyho92@gmail.com \bigskip \\
San Ling \\	
School of Physical and Mathematical Sciences, \\
Nanyang Technological University, Singapore 637371\\ 
 Email: lingsan@ntu.edu.sg \bigskip \\
Gyanendra K. Verma \\	
Department of Mathematical Sciences, \\
UAE University, PO Box 15551, Al Ain, UAE\\
Email:  gkvermaiitdmaths@gmail.com}
% \author{Gyanendra K. Verma\footnote{ Department of Mathematics, Indian Institute of Technology Delhi, India, \textit{email}: gkvermaiitdmaths@gmail.com }$ ^{\orcidB} $  and R. K. Sharma\footnote{ Department of Mathematics, Indian Institute of Technology Delhi, India, \textit{email}: rksharmaiitd@gmail.com }$ ^{\orcidC} $}
\date{}

\begin{document}
	\maketitle
\begin{abstract} 
 In this paper, we provide a polynomial characterization of linear complementary pairs of quasi-cyclic and quasi-twisted codes of index $2$. We also give several examples of linear complementary pairs of quasi-cyclic and quasi-twisted codes with optimal security parameters.
\end{abstract}
\textbf{Keywords}: Quasi-cyclic codes, quasi-twisted codes, linear complementary pairs of codes.\\
	\textbf{Mathematics subject classification}: 94B05, 94B15, 94B60.\\

\section{Introduction}
Quasi-cyclic codes are a class of codes that includes cyclic codes.  Townsend and  Weldon introduced these codes \cite{Townsend1967} in 1967. It is already known that quasi-cyclic codes are asymptotically good. Beyond classical communication and data storage applications, quasi-cyclic codes are used in modern post-quantum cryptographic systems, such as QC-MDPC code-based protocols, due to their balance of security and compact key sizes. A systematic study of the algebraic structure of quasi-cyclic codes over finite fields and rings has been done in \cite{Ling2006,Ling2001,Ling2003,Ling2005}.  In 2023, Abdukhalikov et al. \cite{Abdukhalikov2023} investigated one-generator quasi-cyclic codes over finite fields and determined the duals in polynomial representation. Lally and Fitzpatrick \cite{lally2001} gave the polynomial characterization of quasi-cyclic codes by proving that every
 quasi-cyclic code has a generating set consisting of $\ell$-tuples polynomials with certain properties derived from reduced Gr\"obner
 basis. Based on these structural properties, more asymptotic results, minimum distance bounds, and further applications of quasi-cyclic codes were obtained in the literature (\cite{Luo2023,Semenov2012,Zeh2015}). Further, using polynomial characterization, quasi-cyclic codes of index $2$ have been investigated in \cite{Abdukhalikov20251}. Quasi-twisted codes are another generalization of cyclic codes that includes constacyclic and quasi-cyclic codes. Additionally, they are known to be asymptotically good (\cite{chepyzhov1993,wu2020,kasami1974}). Many record-breaking codes with excellent parameters have been constructed within this class (e.g., see \cite{ackerman2011,aydin2017,qian2019}).

A pair of linear codes of the same length over a finite field is called a linear complementary pair (LCP) of codes if the codes intersect trivially and the direct sum of the codes is the whole ambient space. LCPs of codes have been extensively studied due to their nice algebraic structure and wide applications in cryptography. Ngo et al. \cite{Ngo2015} introduced LCPs of codes in 2015 and showed their applications in resistance against physical attacks like side-channel attacks (SCA) and fault injection attacks (FIA). This method is known as Direct Sum Masking (DSM). The security parameters of a given LCP  $(C,D)$ over a finite field depend on $\min \{d(C),d(D^\perp)\}$, where $d(C)$ denotes the minimum Hamming distance of  $C$ and $d(D^\perp)$ denotes the dual distance of $D$ (for details, see \cite{Carlet2016,carlet2018}). Linear complementary dual (LCD) codes are a special case of an LCP of codes where $D=C^{\perp}$. The security parameters of an LCD code depend only on the minimum distance of the code. 

In 2018, Carlet et al. \cite{carlet2018}
 characterized LCPs of quasi-cyclic codes using the  Chinese Remainder Theorem (CRT) decomposition of codes and also provided necessary and sufficient conditions for pairs of constacyclic and double circulant codes to be an LCP of codes. Moreover, they proved that if $(C, D)$ is an LCP of constacyclic or 2D cyclic code, then $C$ and $D^{\perp}$ are permutation equivalent. G\"uneri et al. \cite{Guneri2018} showed the equivalence of $C$ and $D^{\perp}$ holds for an LCP of  $mD$-cyclic codes as well. These results show that finding LCD codes and LCPs of codes is the same in the context of security parameters. However, $C$ and $D^\perp$ need not be permutation equivalent for an LCP of quasi-cyclic codes \cite[Example 3.3]{carlet2018}.  In \cite{carlet2019}, Carlet et al. proved that for $q>2$, there exists an LCP of linear codes with security parameters as good as the optimal distance of linear codes, and for $q=2$, the security parameters of optimal binary LCPs of linear codes are lower bounded by one less than the optimal distance of linear codes. In \cite{Choi2022,Guneri2023,Li2023}, the authors discussed optimal binary LCPs of linear codes. In this work, we give a polynomial characterization of  LCPs of quasi-cyclic and quasi-twisted codes with index $2$. Using these characterizations, we provide several examples of optimal LCPs of quasi-twisted codes over $\mathbb{F}_q$. The rest of the paper is organized as follows.

 In Section \ref{pre}, we recall the basics of LCPs of codes and the decomposition of quasi-cyclic codes. In Section \ref{secgen2lcp}, we establish necessary and sufficient conditions for LCPs of quasi-cyclic codes of index $2$. We discuss LCPs of one-generator quasi-cyclic codes in Section \ref{secgen1lcp}. We characterize LCPs of quasi-twisted codes in Section \ref{secqtlcp}. In Section \ref{secexample}, we give several examples of LCPs of quasi-cyclic codes.

\section{Preliminaries}\label{pre}
We denote $\mathbb{F}_q$ as a finite field with $q$ elements. A linear code over $\mathbb{F}_q$ of length $n$  is a subspace of $\mathbb{F}_q^n$. We say $C$ is an $[n,k]$ code over $\mathbb{F}_q$ code if $C$ is a linear code of length $n$ and $\dim_{\mathbb{F}_q}(C)=k$. Let $C$ and $D$ be linear codes of length $n$ over $\mathbb{F}_q$. Then $(C, D)$ is called a linear complementary pair (LCP) of codes if they have the trivial intersection and their direct sum is the ambient space $\mathbb{F}_q^n$. That is,
$$ C\oplus D=\mathbb{F}_q^n.$$
Note that if $C$ has parameters $[n,k]$, then $D$ must have parameters $[n,n-k]$.   Linear complementary dual (LCD) codes are a special class of LCPs of codes where $D$ is the dual code of $C$. Equivalently, $C$  is an LCD code if $C\cap C^{\perp}=\{0\}$. 

 Let $\lambda \in \mathbb{F}_q^*$. We define the cyclic $\lambda$-shift operator as 
 $$T_{\lambda}(x_0,x_1,\dots,x_{n-1})=(\lambda x_{n-1},x_0,\dots,x_{n-2}).$$
 A linear code is said to be a $\lambda$-quasi-twisted code of length $n$ and index $\ell$ (in short, $(\lambda,\ell)$-quasi-twisted) if $T_{\lambda}^\ell(c)\in C$ for all $c\in C$. If $\lambda=1$, then $C$ is  called a quasi-cyclic code of index $\ell$. When $\ell=1$, then $C$ is called  a $\lambda$-constacyclic code, and when $\ell=1$ $\lambda=1$, $C$ is called a cyclic code. If $C$ is a $\lambda$-quasi-twisted code of length $n$ and index $\ell$, then one can assume that $\ell$ divides $n$. Thus, we assume that $n=m\ell$. 

 Let $R_\lambda=\mathbb{F}_q[x]/\langle x^m-\lambda\rangle$ and for $\lambda=1$, we denote $R_1$ by $R$. Note that $\lambda$-constacyclic codes over $\mathbb{F}_q$ of length $m$ are ideals of $R_\lambda$. Let $C$ be a $\lambda$-quasi-twisted code of length $n=m\ell$ and index $\ell$ over $\mathbb{F}_q$. It is well known that $C$ can be identified as an $R_\lambda$-submodule of $R_\lambda^\ell$ (for instance, see \cite{jia2012}). The correspondence is given by
 $$(c_{0,0},c_{0,1},\dots,c_{0,\ell-1},c_{1,0},\dotsc_{1,\ell-1},\dots,c_{m-1,0},\dots,c_{m-1,\ell-1})\mapsto (c_0(x),c_1(x),\dots c_\ell(x))\in R_\lambda^\ell,$$
 where $c_i(x)=c_{0,i}+c_{1,i}x+\dots+c_{m-1,i}x^{m-1}\in R_\lambda$. 

Throughout the article, we assume $\gcd(m,q)=1$.  We decompose quasi-cyclic codes ($\lambda=1$) into constituent codes (for details, see \cite[Chapter 7, Quasi-cyclic codes p.132]{Huffman2021}). Let $x^m-1=f_1(x)f_2(x)\dots f_t(x)$ be the factorization into a product of distinct irreducible polynomials over $\mathbb{F}_q$. For each $i$,  let $F_i:=\mathbb{F}_q[x]/\langle f_i(x) \rangle$. 
Let $\xi$ be a primitive $m^{\text{th}}$ root of unity in some extension of $\mathbb{F}_q$. Let $\xi^{u_i}$ be a root of $f_i(x)$.
Then $F_i= \mathbb{F}_q(\xi^{u_i})$. By the Chinese
 Remainder Theorem (CRT), $R=\mathbb{F}_q[x]/\langle x^m-1\rangle$ can be decomposed as
 $$
R \cong \bigoplus_{i=1}^{t} F_i.
 $$ 
The isomorphism between $R$ and its CRT decomposition is given by 
\[
a(x) \mapsto \left(a(\xi^{u_1}), \dots, a(\xi^{u_t})\right). 
\]
This isomorphism extends naturally to  $R^\ell$, which implies that
 $$
R^\ell \cong \bigoplus_{i=1}^{t} F_i^\ell.
 $$ 
Then, a QC code $C$ of index $\ell$ can be  decomposed as 
\begin{equation} \label{crtC}
C \cong \bigoplus_{i=1}^{t} C_i, 
\end{equation} 
where each component code is a linear code of length $\ell$ over the base field $F_i$.  The component codes  $C_i$ are called the \textit{constituents} of $C$.

The constituents can be described in terms of the generators of $C$. Namely, if $ C$ is an $r$-generator QC code with generators
\[
 \{ (a_{1,1}(x),\dots ,a_{1,\ell}(x)) ,\dots, (a_{r,1}(x),\dots,a_{r,\ell} (x) ) \} \subset R^\ell,
\]
then 
\begin{align*}
C_i &= \text{Span}_{F_i} \{ (a_{b,1}(\xi^{u_i}),\dots ,a_{b,\ell}(\xi^{u_i})) : 1 \le b \le r \}, \text{ for } 1 \le i \le t.
\end{align*}

We have the following characterization of  QC LCP of codes from \cite[Theorem 3.1]{carlet2018}.

\begin{theorem} \label{qclcp} Let $C$ and $D$ be  $q$-ary QC codes of length $m\ell$ and index $\ell$ with  CRT decompositions  $C \cong \bigoplus_{i=1}^{t} C_i$  and $D \cong \bigoplus_{i=1}^{t} D_i$, respectively.
Then $(C,D)$ is an LCP of codes if and only if 
$(C_i, D_i)$ is an LCP of codes for all $1\leq i\leq t$. 
 \end{theorem}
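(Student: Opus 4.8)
The plan is to transport the two defining conditions of an LCP --- trivial intersection and full sum --- through the CRT isomorphism $\Phi\colon R^\ell \xrightarrow{\ \sim\ } \bigoplus_{i=1}^{t} F_i^\ell$, and to observe that each of these conditions can be detected summand by summand. First I would record that $\Phi$ is an isomorphism of $\mathbb{F}_q$-vector spaces (indeed of $R$-modules) which, by the very description of the constituents in \eqref{crtC} via evaluation at $\xi^{u_i}$, carries $C$ onto the external direct sum $\bigoplus_i C_i$ and $D$ onto $\bigoplus_i D_i$. Since $\Phi$ is a linear bijection, it commutes with intersections and sums of subspaces: $\Phi(C\cap D)=\Phi(C)\cap\Phi(D)$ and $\Phi(C+D)=\Phi(C)+\Phi(D)$, so we may argue entirely inside $\bigoplus_i F_i^\ell$.

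Next I would use the elementary fact that, for subspaces sitting inside an external direct product, intersections and sums are computed componentwise: if $U=\bigoplus_i U_i$ and $V=\bigoplus_i V_i$ with $U_i,V_i\subseteq F_i^\ell$, then $U\cap V=\bigoplus_i(U_i\cap V_i)$ and $U+V=\bigoplus_i(U_i+V_i)$. In each identity one inclusion is immediate, and the reverse follows by applying the projection onto the $i$-th factor. Taking $U=\Phi(C)$ and $V=\Phi(D)$ yields $C\cap D\cong\bigoplus_i(C_i\cap D_i)$ and $C+D\cong\bigoplus_i(C_i+D_i)$.

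From here the equivalence is immediate. The space $\bigoplus_i(C_i\cap D_i)$ is zero exactly when every $C_i\cap D_i$ is zero, and $\bigoplus_i(C_i+D_i)$ fills $\bigoplus_i F_i^\ell$ exactly when $C_i+D_i=F_i^\ell$ for each $i$. Hence $C\oplus D=\mathbb{F}_q^{m\ell}$ holds if and only if, for every $i$, both $C_i\cap D_i=\{0\}$ and $C_i+D_i=F_i^\ell$, that is, $(C_i,D_i)$ is an LCP of codes of length $\ell$ over $F_i$.

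There is essentially no deep obstacle here; the proof is a matter of bookkeeping. The one point that deserves a little care is making sure the CRT map genuinely sends $C$ to the \emph{external} direct sum $\bigoplus_i C_i$, not merely to something abstractly isomorphic to it --- this is exactly why one wants the explicit evaluation description of the constituents --- and noting that the componentwise computation of sums and intersections is a feature of the ambient direct product, independent of which codes are involved. Alternatively, the sum condition could be replaced by a dimension count using $\dim_{\mathbb{F}_q}C=\sum_i[F_i:\mathbb{F}_q]\dim_{F_i}C_i$ and the analogous formula for $D$, but the direct-sum argument sidesteps even that.
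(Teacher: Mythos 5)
Your proof is correct: the CRT map is an $\mathbb{F}_q$-linear bijection sending $C$ and $D$ onto the external sums of their constituents, and intersections and sums of such subspaces are computed componentwise, which immediately gives the equivalence. The paper itself offers no proof of this statement --- it simply quotes it from Carlet et al.\ (Theorem 3.1 of \cite{carlet2018}) --- and your argument is the standard CRT-decomposition argument used there, so there is nothing to add beyond noting (as you do) that the decomposition $C\cong\bigoplus_i C_i$ is genuinely the image of $C$ under the evaluation map, not merely an abstract isomorphism.
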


\section{LCPs of quasi-cyclic codes of index $2$}\label{secgen2lcp}
In \cite{lally2001,Abdukhalikov20251}, Lally and Fitzpatrick showed that a quasi-cyclic code of index $\ell$ can be generated by the rows of an upper triangular $\ell \times \ell$  polynomial matrix satisfying certain conditions. We state for $\ell=2$.
\begin{theorem} \label{qc2}
Let $C$ be a quasi-cyclic code of length $2m$ and index $2.$ 
Then $C$ is generated by two elements $(g_{11}(x), g_{12}(x))$ and $(0, g_{22}(x))$ such that they satisfy the following conditions: 
\begin{gather}  
g_{11}(x) \mid (x^m-1) \text{ and } g_{22}(x) \mid(x^m-1), \notag \\
\deg g_{12}(x) < \deg g_{22}(x), \tag{$\ast$}\\
g_{11}(x)g_{22}(x) \mid (x^m-1)g_{12}(x). \notag
\end{gather}
 Moreover, in this case $\dim (C) = 2m-\deg g_{11}(x)-\deg g_{22}(x)$.
\end{theorem}

\begin{remark} \label{remarkgcd}   If $\gcd(q,m)=1$, then the condition 
$$g_{11}(x)g_{22}(x) \mid (x^m-1)g_{12}(x)$$ 
in Theorem \ref{qc2} is equivalent to the condition $\gcd(g_{11}(x),g_{22}(x)) \mid g_{12}(x)$, since $x^m-1$ has no multiple roots.
\end{remark}

Let $C$ and $D$ be two quasi-cyclic codes of length $2m$ and index $2$, generated by the rows of the matrices $G$ and $H$, respectively, where 
\begin{equation*}
    G= \begin{bmatrix}
    g_{11}(x)&g_{12}(x)\\
    0& g_{22}(x)
\end{bmatrix} \text{ and } H= \begin{bmatrix}
    f_{11}(x)&f_{12}(x)\\
    0& f_{22}(x)
\end{bmatrix}
\end{equation*}
and $f_{ij}(x)$, $g_{ij}(x)$ satisfy Condition $(\ast)$.

Then the constituent codes $C_i$ and $D_i$ are generated by the rows of the matrices
$$
G_i=  \begin{bmatrix}
g_{11}(\xi^{u_i}) & g_{12}(\xi^{u_i}) \\
0 & g_{22}(\xi^{u_i}) 
\end{bmatrix} \text{ and }
H_i=  \begin{bmatrix}
f_{11}(\xi^{u_i}) & f_{12}(\xi^{u_i}) \\
0 & f_{22}(\xi^{u_i}) 
\end{bmatrix},$$
respectively. 

Let $g(x)=\gcd(g_{11}(x),g_{22}(x))$ and $f(x)=\gcd(f_{11}(x),f_{22}(x))$. Then, by Remark \ref{remarkgcd}, $g(x)\mid g_{12}(x)$ and $f(x)\mid f_{12}(x)$.
% Let  $l(x)= (x^m-1)/\text{lcm}(g_{11}(x),g_{22}(x))$ and $h(x)= (x^m-1)/\text{lcm}(f_{11}(x),f_{22}(x))$. 
Let $g_{11}(x)=g(x)   g_{11}'(x), g_{22}(x)=g(x) g_{22}'(x)$, and $f_{11}(x)=f(x)   f_{11}'(x), f_{22}(x)=f(x) f_{22}'(x)$. 

Now, we give a new characterization for a pair of quasi-cyclic codes of index $2$ to be an LCP of codes. We use the notations as above.

\begin{theorem}\label{qslcp}
Let $C$ and $D$ be two quasi-cyclic codes of length $2m$ and index $2$ generated by $(g_{11}(x),g_{12}(x))$, $(0,g_{22}(x))$ and $(f_{11}(x),f_{12}(x))$, $(0,f_{22}(x))$, respectively, satisfying Condition $(\ast)$. Then $(C,D)$ is an LCP of codes if and only if all of the following conditions are true:
	\begin{enumerate}[(I)]
       \item  $\gcd(f_{11}(x),g_{11}(x))=1$.
		\item  $g_{11}(x)g_{22}(x)f_{11}(x)f_{22}(x)=(x^m-1)^2$.
		\item $\gcd(g_{22}'(x),f_{22}'(x), g_{11}(x)f_{12}(x)-g_{12}(x)f_{11}(x))=1$.
	\end{enumerate}
\end{theorem}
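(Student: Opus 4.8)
The plan is to reduce everything to the constituent-level LCP criterion of Theorem~\ref{qclcp} and then make the field-element conditions explicit. By Theorem~\ref{qclcp}, $(C,D)$ is an LCP if and only if $(C_i,D_i)$ is an LCP of codes over $F_i$ for every $i$. Since $C_i$ and $D_i$ are generated inside $F_i^2$ by the rows of the $2\times 2$ upper-triangular matrices $G_i$ and $H_i$, the pair $(C_i,D_i)$ is an LCP exactly when the $4\times 2$ matrix $\left[\begin{smallmatrix} G_i \\ H_i \end{smallmatrix}\right]$ has rank equal to $\dim C_i + \dim D_i = 2$, i.e. when the combined generators span all of $F_i^2$ \emph{and} $\dim C_i + \dim D_i = 2$. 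So the first step is to translate ``LCP at constituent $i$'' into: (a) $\dim C_i + \dim D_i = 2$, and (b) the four rows of $G_i,H_i$ together span $F_i^2$. Over the field $F_i$ each of $g_{11}(\xi^{u_i}), g_{22}(\xi^{u_i}), f_{11}(\xi^{u_i}), f_{22}(\xi^{u_i})$ is either zero or nonzero, and the dimension of $C_i$ is the number of nonzero diagonal entries among $g_{11}(\xi^{u_i}), g_{22}(\xi^{u_i})$ (using $(\ast)$ and the divisibility condition), similarly for $D_i$.

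Next I would do a case analysis on which of these four evaluations vanish. The condition $\dim C_i + \dim D_i = 2$ forces exactly one of several configurations at each root; in particular $g_{11}(\xi^{u_i})$ and $f_{11}(\xi^{u_i})$ cannot both vanish (that would make the first coordinate of every codeword in $C_i + D_i$ zero, or drop the dimension) — this is precisely condition~(I) once we quantify over all roots $\xi^{u_i}$ of $x^m-1$, since $\gcd(f_{11},g_{11})=1$ iff no root of $x^m-1$ is a common root. Similarly, the requirement that the four rows span $F_i^2$ when, say, $g_{11}(\xi^{u_i})\ne 0$ but $g_{22}(\xi^{u_i})=f_{22}(\xi^{u_i})=0$ forces the off-diagonal entries to interact correctly; eliminating the first coordinate via row reduction, spanning the second coordinate becomes the nonvanishing of the $2\times 2$ determinant $g_{11}(\xi^{u_i})f_{12}(\xi^{u_i}) - g_{12}(\xi^{u_i})f_{11}(\xi^{u_i})$. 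Collecting the roots where $g_{22}' $ and $f_{22}'$ simultaneously vanish and demanding this determinant be nonzero there is exactly condition~(IV), after observing that $g_{22}(\xi^{u_i})=0$ iff $g(\xi^{u_i})=0$ or $g_{22}'(\xi^{u_i})=0$, and the $g(\xi^{u_i})=0$ case is already governed by conditions (II)--(III).

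Conditions~(II) and~(III) should emerge as the ``dimension bookkeeping'' constraints. Writing $x^m-1 = g\cdot g_{11}'\cdot g_{22}'\cdot l = f\cdot f_{11}'\cdot f_{22}'\cdot h$, the degree of $C$ is $2m - \deg g_{11} - \deg g_{22}$ and of $D$ is $2m - \deg f_{11} - \deg f_{22}$; for $(C,D)$ to be complementary we need $\dim C + \dim D = 2m$, i.e. $\deg g_{11} + \deg g_{22} + \deg f_{11} + \deg f_{22} = 2m = 2\deg(x^m-1)$. Combined with the root-by-root spanning analysis — at each root of $x^m-1$ exactly the ``right'' multiset of the four polynomials must vanish — this partition-of-roots condition is equivalent to the two stated polynomial identities $g = (x^m-1)/\mathrm{lcm}(f_{11},f_{22})$ and $f = (x^m-1)/\mathrm{lcm}(g_{11},g_{22})$. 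Concretely, I would show: a root $\zeta$ of $x^m-1$ divides $g$ iff it divides neither $f_{11}$ nor $f_{22}$ (this is (II)), and symmetrically for (III); these say that wherever $C_i$ has ``full rank on the first coordinate'' ($g(\zeta)\ne 0$, equivalently $\zeta\nmid g$... reversed), $D_i$ contributes the missing dimension, and vice versa.

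The main obstacle I anticipate is the case analysis: there are up to $2^4$ vanishing patterns for $(g_{11},g_{22},f_{11},f_{22})$ at a given root, and one must check for each that the LCP condition at that constituent is equivalent to the appropriate combination of (I)--(IV) holding at that root, while also verifying that conditions (II)--(III) are not just necessary but, together with (I) and (IV), sufficient — i.e. that they really do rule out every bad pattern (such as $g_{11}(\zeta)=g_{22}(\zeta)=0$ with $D_i$ too small, or both codes having dimension $1$ on the same $1$-dimensional subspace of $F_i^2$). The bookkeeping is routine but needs care to ensure the four conditions are jointly equivalent to the conjunction over all roots, with no redundancy or gap; I would organize it by first using (I) to dispose of the common-first-coordinate-vanishing cases, then (II)--(III) to pin down exactly where $g$ and $f$ vanish, and finally (IV) to handle the remaining determinant condition at the roots of $\gcd(g_{22}',f_{22}')$.
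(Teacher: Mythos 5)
Your proposal is correct and follows essentially the same route as the paper: reduce via Theorem \ref{qclcp} to the constituents $(C_i,D_i)$, and then run a case analysis over which of $g_{11},g_{22},f_{11},f_{22}$ (equivalently, which of the factors $g, g_{11}', g_{22}', f$ of $x^m-1$) vanish at each root $\xi^{u_i}$, identifying (I) with the no-common-root condition on $g_{11},f_{11}$, (II)--(III) with the root-set bookkeeping, and (IV) with the nonvanishing of the determinant $g_{11}f_{12}-g_{12}f_{11}$ at common roots of $g_{22}'$ and $f_{22}'$. The paper organizes the same analysis as four necessity lemmata plus one sufficiency lemma, but the underlying argument is identical.
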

We provide the proof of Theorem \ref{qslcp} through the following sequence of lemmata.
\begin{lemma} \label{nI} If $(C,D)$ is an LCP of codes, then  (I) holds. 
\end{lemma}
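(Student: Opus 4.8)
The plan is to exploit Theorem \ref{qclcp}, which tells us that if $(C,D)$ is an LCP of codes, then each constituent pair $(C_i,D_i)$ is an LCP of codes over $F_i$; in particular $\dim_{F_i} C_i + \dim_{F_i} D_i = 2$ for every $i$, and $C_i \cap D_i = \{0\}$. First I would translate the divisibility $g_{11}(x) \mid x^m-1$ into the statement that, for each $i$, the evaluation $g_{11}(\xi^{u_i})$ is zero precisely when $f_i(x) \mid g_{11}(x)$ (using that $\xi^{u_i}$ is a root of the irreducible factor $f_i$ and that $x^m-1$ is squarefree since $\gcd(m,q)=1$). The same holds for $g_{22}, f_{11}, f_{22}$. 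So the constituent matrices $G_i$ and $H_i$ above have entries that either vanish or are units in $F_i$, governed entirely by which irreducible factors $f_i$ divide the relevant generator polynomials.

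Next I would argue by contradiction: suppose $\gcd(f_{11}(x), g_{11}(x)) \neq 1$, so some irreducible factor $f_i(x)$ divides both $f_{11}(x)$ and $g_{11}(x)$. Then $g_{11}(\xi^{u_i}) = 0$ and $f_{11}(\xi^{u_i}) = 0$, so the constituent matrices become
\[
G_i = \begin{bmatrix} 0 & g_{12}(\xi^{u_i}) \\ 0 & g_{22}(\xi^{u_i}) \end{bmatrix}, \qquad
H_i = \begin{bmatrix} 0 & f_{12}(\xi^{u_i}) \\ 0 & f_{22}(\xi^{u_i}) \end{bmatrix}.
\]
Every row of $G_i$ and of $H_i$ lies in the one-dimensional subspace $\{0\} \times F_i \subseteq F_i^2$. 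Hence $C_i + D_i \subseteq \{0\} \times F_i$, which is a proper subspace of $F_i^2$, so $C_i \oplus D_i \neq F_i^2$; that is, $(C_i, D_i)$ is not an LCP of codes over $F_i$. By Theorem \ref{qclcp} this contradicts the assumption that $(C,D)$ is an LCP of codes, so we must have $\gcd(f_{11}(x), g_{11}(x)) = 1$.

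I do not anticipate a serious obstacle here; the argument is essentially a dimension count at a single constituent. The one point requiring a little care is the bookkeeping between polynomial divisibility and vanishing of evaluations at the roots $\xi^{u_i}$ — namely verifying that $f_i \mid p(x)$ is equivalent to $p(\xi^{u_i}) = 0$ for the polynomials $p$ involved, which rests on the squarefreeness of $x^m - 1$ and on $F_i = \mathbb{F}_q(\xi^{u_i})$. Once that dictionary is in place, the contradiction is immediate, and the same evaluate-at-a-bad-constituent technique will presumably be reused when proving that conditions (II), (III), and (IV) are necessary in the subsequent lemmata.
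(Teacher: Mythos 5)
Your argument is correct and essentially identical to the paper's proof: both pick an irreducible common factor $f_i$ of $f_{11}$ and $g_{11}$, observe that the first columns of $G_i$ and $H_i$ then vanish, and conclude that $C_i+D_i$ cannot be all of $F_i^2$, contradicting Theorem \ref{qclcp}. Your extra remark that $C_i+D_i\subseteq\{0\}\times F_i$ just makes explicit the dimension count the paper states as $\dim(C_i+D_i)\neq 2$.
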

\begin{proof}
     Assume that $\gcd(f_{11}(x),g_{11}(x))\neq 1$. This implies that there exists $f_i(x)$ such that $f_i(x)\mid g_{11}(x)$ and $f_i(x)\mid f_{11}(x)$.

Then $g_{11}(\xi^{u_i})= f_{11}(\xi^{u_i})=0$, and $C_i$ and $D_i$ are generated by the rows of the  matrices
$$
G_i=\begin{bmatrix}
g_{11}(\xi^{u_i}) & g_{12}(\xi^{u_i}) \\
0 & g_{22}(\xi^{u_i}) 
\end{bmatrix}    = \begin{bmatrix}
0 & g_{12}(\xi^{u_i}) \\
0 & g_{22}(\xi^{u_i}) 
\end{bmatrix} \text{ and } H_i=\begin{bmatrix}
f_{11}(\xi^{u_i}) & f_{12}(\xi^{u_i}) \\
0 & f_{22}(\xi^{u_i}) 
\end{bmatrix}    = \begin{bmatrix}
0 & f_{12}(\xi^{u_i}) \\
0 & f_{22}(\xi^{u_i})  
\end{bmatrix},
$$ respectively.
This implies that  $\dim(C_i+D_i)\neq 2$.  Therefore, $(C_i,D_i)$ is not an LCP of codes.     
\end{proof}

\begin{lemma} \label{nII} If $(C,D)$ is an LCP of codes, then  (II) holds. 
\end{lemma}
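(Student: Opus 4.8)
\textbf{Proof proposal for Lemma \ref{nII}.}

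The plan is to argue by contrapositive, exactly as in the proof of Lemma \ref{nI}: I will assume that condition (II) fails and exhibit a constituent index $i$ at which $(C_i,D_i)$ is not an LCP of codes, which by Theorem \ref{qclcp} forces $(C,D)$ not to be an LCP. The key observation is that condition (II), namely $g(x)=\frac{x^m-1}{\mathrm{lcm}(f_{11}(x),f_{22}(x))}=h(x)$, is an equality between two divisors of $x^m-1$, and since $x^m-1$ is squarefree (as $\gcd(q,m)=1$), it is equivalent to an equality of root sets: $\xi^{u_i}$ is a root of $g(x)$ if and only if $\xi^{u_i}$ is a root of $h(x)$, i.e. if and only if $f_{11}(\xi^{u_i})=f_{22}(\xi^{u_i})=0$.

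First I would translate the failure of (II) into a statement about a single irreducible factor $f_i(x)$ of $x^m-1$. If (II) fails, then $g(x)\neq h(x)$, so there is some $i$ with either (a) $f_i(x)\mid g(x)$ but $f_i(x)\nmid h(x)$, or (b) $f_i(x)\mid h(x)$ but $f_i(x)\nmid g(x)$. I would handle the two cases by examining the rank of the stacked generator matrix $\left[\begin{smallmatrix}G_i\\ H_i\end{smallmatrix}\right]$ over $F_i$. In case (a), $f_i(x)\mid g(x)=\gcd(g_{11},g_{22})$ gives $g_{11}(\xi^{u_i})=g_{22}(\xi^{u_i})=0$, so $G_i$ is the zero matrix and $\dim(C_i+D_i)\le 1<2$; hence $C_i\oplus D_i\neq F_i^2$ and $(C_i,D_i)$ is not an LCP. (Here $f_i\nmid h$ is not even needed; the point is just that $f_i\mid g$ already kills the constituent, and in a genuine LCP this can only be compensated if $D_i=F_i^2$, which the complementary condition (II)/$h$ will be shown to control in the converse direction.) Actually the cleaner route is: in case (a) we have $G_i=0$, so we need $D_i=F_i^2$, i.e. $\dim D_i=2$; but $\dim D_i=2$ precisely when $H_i$ is invertible, i.e. $f_{11}(\xi^{u_i})f_{22}(\xi^{u_i})\neq 0$; since $f_i\nmid h$ means $\xi^{u_i}$ is \emph{not} a common root of $f_{11}$ and $f_{22}$ — wait, $f_i\nmid h$ says $\xi^{u_i}$ is not a root of $\mathrm{lcm}$-complement, i.e. $\xi^{u_i}$ \emph{is} a root of $\mathrm{lcm}(f_{11},f_{22})$, so at least one of $f_{11}(\xi^{u_i}),f_{22}(\xi^{u_i})$ vanishes, whence $H_i$ is singular and $\dim D_i\le 1$; combined with $\dim C_i\le 1$ this gives $\dim(C_i+D_i)\le 2$ but with $C_i\cap D_i\ne 0$ impossible to avoid when both are $\le 1$-dimensional and their sum must be $2$ — so in fact $\dim(C_i+D_i)<2$ unless both are exactly $1$ and disjoint, which is ruled out since $G_i=0$ forces $\dim C_i\le 1$ and then $\dim D_i$ would need to be $2$. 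Let me just say: in case (a), $\dim C_i\le 1$ and $\dim D_i\le 1$, so $\dim(C_i+D_i)\le 2$ with equality only if both are $1$-dimensional; but $G_i=0$ gives $C_i=0$, so $\dim(C_i+D_i)=\dim D_i\le 1<2$. In case (b), symmetrically $f_i\mid h$ gives $f_{11}(\xi^{u_i})=f_{22}(\xi^{u_i})=0$ so $H_i=0$ and $D_i=0$, hence we would need $C_i=F_i^2$, i.e. $g_{11}(\xi^{u_i})g_{22}(\xi^{u_i})\neq 0$; but $f_i\nmid g$ means $\xi^{u_i}$ is not a common root of $g_{11}$ and $g_{22}$, which does \emph{not} immediately give both nonzero. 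So case (b) needs a further reduction: I would instead note that $f_i\mid h=(x^m-1)/\mathrm{lcm}(f_{11},f_{22})$ together with $x^m-1=f f_{11}'f_{22}'h$ forces $\xi^{u_i}$ to be a root of exactly the "$h$-part", and then re-derive that $H_i=0$; with $H_i=0$ we need $G_i$ invertible, i.e. $g_{11}(\xi^{u_i})\neq 0\neq g_{22}(\xi^{u_i})$, i.e. $f_i\nmid g_{11}$ and $f_i\nmid g_{22}$, i.e. $f_i\nmid \mathrm{lcm}(g_{11},g_{22})$, i.e. $f_i\mid l(x)=(x^m-1)/\mathrm{lcm}(g_{11},g_{22})$. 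So the precise statement to prove is: \emph{(II) holds $\iff$ for every $i$, $f_i\mid h \Rightarrow f_i\mid l$}, and the contrapositive produces the bad constituent.

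I expect the main obstacle to be bookkeeping the several divisibility relations among $g,g_{11}',g_{22}',l$ (and the $f$-analogues) cleanly enough that the case split is transparent; the conceptual content is light once one uses squarefreeness of $x^m-1$ to pass between "divides" and "shares a root $\xi^{u_i}$", and then reads off the ranks of $G_i$ and $H_i$ from which of $g_{11},g_{12},g_{22},f_{11},f_{12},f_{22}$ vanish at $\xi^{u_i}$. A tidy way to organize the write-up is to first record the dictionary $g_{11}(\xi^{u_i})=0\iff f_i\mid g_{11}$, etc., then observe $g(x)=h(x)$ iff they have the same irreducible factors, then split on an irreducible factor $f_i$ witnessing inequality and in each case show $\mathrm{rank}\!\left[\begin{smallmatrix}G_i\\ H_i\end{smallmatrix}\right]<2$ or $C_i\cap D_i\neq\{0\}$, contradicting that $(C_i,D_i)$ is an LCP, hence by Theorem \ref{qclcp} contradicting that $(C,D)$ is an LCP.
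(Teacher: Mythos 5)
Your overall strategy (pass to constituents via Theorem \ref{qclcp}, translate divisibility by an irreducible factor $f_i(x)$ into vanishing at $\xi^{u_i}$ using squarefreeness of $x^m-1$, and split on a factor witnessing $g\neq h$) is exactly the paper's, and your case (a) is correct — in fact slightly cleaner than the paper's version, since you only need $H_i$ singular rather than invoking condition (I). But there is a genuine error in your opening dictionary and it derails case (b): a root of $h(x)=(x^m-1)/\mathrm{lcm}(f_{11}(x),f_{22}(x))$ is an $m$-th root of unity at which \emph{neither} $f_{11}$ nor $f_{22}$ vanishes; the common roots of $f_{11},f_{22}$ are the roots of $f(x)=\gcd(f_{11},f_{22})$, not of $h(x)$. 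Hence $f_i\mid h$ gives $f_{11}(\xi^{u_i})\neq 0\neq f_{22}(\xi^{u_i})$, so $H_i$ is \emph{invertible} and $D_i=F_i^2$ — the opposite of your claim that $H_i=0$. Consequently your ``further reduction'' and the final reformulation ``(II) holds $\iff$ for every $i$, $f_i\mid h\Rightarrow f_i\mid l$'' are both wrong (for instance, if $f_{11}=f_{22}=x^m-1$ then $h=1$ and the implication is vacuous, while (II) still demands $g=1$), so case (b) as written does not close.

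The fix is short and is what the paper does: in case (b), $f_i\mid h$ gives $D_i=F_i^2$ as above, while $f_i\nmid g=\gcd(g_{11},g_{22})$ means at least one of $g_{11}(\xi^{u_i}),g_{22}(\xi^{u_i})$ is nonzero, so $G_i$ has a nonzero row and $\dim C_i\geq 1$; hence $C_i\cap D_i\supseteq C_i\neq\{0\}$ (equivalently $\dim C_i+\dim D_i\geq 3$), and $(C_i,D_i)$ is not an LCP. Your worry that $f_i\nmid g$ ``does not immediately give both nonzero'' is a red herring: you do not need $G_i$ invertible, because $D_i$ is already all of $F_i^2$, so any nonzero vector of $C_i$ lies in the intersection. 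With this correction your argument coincides with the paper's proof of Lemma \ref{nII}.
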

\begin{proof}
 Let $(C,D)$ be an LCP of codes. By Lemma \ref{nI}, $(I)$ holds. We claim that
\begin{align}\label{ii.1}
g(x)= \frac{x^m-1}{\text{lcm}(f_{11}(x),f_{22}(x))}, \text{ where } g(x)=\gcd(g_{11}(x),g_{22}(x)).
\end{align}
On the contrary, assume $g(x)\neq \frac{x^m-1}{\text{lcm}(f_{11}(x),f_{22}(x))}$. This implies there exists irreducible $f_i(x)$ such that:\\
    $(i)$ $f_i(x)\mid g(x)$ and $f_i(x)\nmid \frac{x^m-1}{\text{lcm}(f_{11}(x),f_{22}(x))} $ or \\
    $(ii)$ $f_i(x)\nmid g(x)$ and $f_i(x)\mid \frac{x^m-1}{\text{lcm}(f_{11}(x),f_{22}(x))} $.\\\\
    $(i)$ If $f_i(x)\mid g(x)$ and $f_i(x)\nmid \frac{x^m-1}{\text{lcm}(f_{11}(x),f_{22}(x))} $, then $f_i(x)\mid g_{11}(x),g_{22}(x),g_{12}(x)$ and $f_i(x)\mid  \text{lcm}(f_{11}(x),f_{22}(x))$. This implies that $f_i(x)\mid f_{22}(x)$ since $\gcd(g_{11}(x),f_{11}(x))=1$.  Thus, $g_{11}(\xi^{u_i})= g_{22}(\xi^{u_i})=g_{12}(\xi^{u_i})=f_{22}(\xi^{u_i})=0$ and $C_i$ and $D_i$ are generated by the rows of the  matrices,
$$
G_i=\begin{bmatrix}
g_{11}(\xi^{u_i}) & g_{12}(\xi^{u_i}) \\
0 & g_{22}(\xi^{u_i}) 
\end{bmatrix}    = \begin{bmatrix}
0 & 0 \\
0 & 0 
\end{bmatrix} \text{ and } H_i=\begin{bmatrix}
f_{11}(\xi^{u_i}) & f_{12}(\xi^{u_i}) \\
0 & f_{22}(\xi^{u_i}) 
\end{bmatrix}    = \begin{bmatrix}
f_{11}(\xi^{u_i}) & f_{12}(\xi^{u_i}) \\
0 & 0  
\end{bmatrix},
$$ respectively.
This implies that  $\dim(C_i)+\dim(D_i)\neq 2$.  Therefore, $(C_i, D_i)$ is not an LCP of codes.  \\
$(ii)$ If $f_i(x)\nmid g(x)$ and $f_i(x)\mid \frac{x^m-1}{\text{lcm}(f_{11}(x),f_{22}(x))}$, then $f_i(x)\nmid g_{11}(x)$ or $f_i(x)\nmid g_{22}(x)$, and  $f_i(x)\nmid \text{lcm}(f_{11}(x),f_{22}(x))$   (equivalently, $f_i(x)\nmid f_{11}(x),f_{22}(x)$). This implies that 
 $g_{11}(\xi^{u_i})\neq 0$ or $g_{22}(\xi^{u_i})\neq 0$, and  $f_{11}(\xi^{u_i})\neq 0 \neq f_{22}(\xi^{u_i})$. Then $C_i$ and $D_i$ are generated by the rows of the matrix 
 $$
G_i=\begin{bmatrix}
g_{11}(\xi^{u_i}) & g_{12}(\xi^{u_i}) \\
0 & g_{22}(\xi^{u_i}) 
\end{bmatrix} \text{ and } H_i=\begin{bmatrix}
f_{11}(\xi^{u_i}) & f_{12}(\xi^{u_i}) \\
0 & f_{22}(\xi^{u_i}) 
\end{bmatrix},$$ respectively. This implies that $\dim(C_i)\geq 1$ and $\dim(D_i)=2$. Thus $C_i\cap D_i\neq \{0\}$. Therefore, $(C_i, D_i)$ is not an LCP of codes, a contradiction. Thus, Eq. \ref{ii.1} holds.\\
Using similar arguments, we get 
\begin{align}\label{ii.2}
    f(x)= \frac{x^m-1}{\text{lcm}(g_{11}(x),g_{22}(x))}, \text{ where } f(x)=\gcd(f_{11}(x),f_{22}(x)).
\end{align}
By Eqs. \ref{ii.1} and \ref{ii.2}, we have 
\begin{align*}
    &g(x)\text{lcm}(f_{11}(x),f_{22}(x))\cdot f(x)\text{lcm}(g_{11}(x),g_{22}(x))=(x^m-1)^2,\\
    i.e.,\ \  &g_{11}(x)g_{22}(x)f_{11}(x)f_{22}(x)=(x^m-1)^2.
    \end{align*}  
    This completes the proof.
\end{proof}

\begin{lemma}\label{nIII}
    If $(C,D)$ is an LCP of codes, then $(III)$ holds.
\end{lemma}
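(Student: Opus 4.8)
The plan is to mirror the structure of the proofs of Lemmas \ref{nI} and \ref{nII}: we want to show that if condition (III), namely $f(x)=\frac{x^m-1}{\mathrm{lcm}(g_{11}(x),g_{22}(x))}$, fails, then some constituent pair $(C_i,D_i)$ fails to be an LCP, which by Theorem \ref{qclcp} contradicts $(C,D)$ being an LCP. By symmetry this is exactly the statement of Lemma \ref{nII} with the roles of $C$ and $D$ (equivalently, of the $g$-data and the $f$-data) interchanged, since condition (I), $\gcd(f_{11}(x),g_{11}(x))=1$, is symmetric in $C$ and $D$ and is already available from Lemma \ref{nI}.

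First I would record that (I) holds by Lemma \ref{nI}, so in particular $\gcd(g_{11}(x),f_{11}(x))=1$, and note that $l(x)=\frac{x^m-1}{\mathrm{lcm}(g_{11}(x),g_{22}(x))}$ by definition. Then I would assume $f(x)\neq l(x)$ and split into the two cases exactly as in Lemma \ref{nII}: either there is an irreducible factor $f_i(x)$ of $x^m-1$ with $f_i(x)\mid f(x)$ but $f_i(x)\nmid l(x)$, or there is an irreducible factor $f_j(x)$ with $f_j(x)\nmid f(x)$ but $f_j(x)\mid l(x)$. In the first case, $f_i(x)$ divides $f_{11}(x),f_{22}(x)$ and $f_{12}(x)$ (the latter since $f(x)\mid f_{12}(x)$ by Remark \ref{remarkgcd}), and $f_i(x)\mid\mathrm{lcm}(g_{11}(x),g_{22}(x))$; combined with $\gcd(g_{11}(x),f_{11}(x))=1$ this forces $f_i(x)\mid g_{22}(x)$. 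Hence $H_i$ is the zero matrix while $G_i=\begin{bmatrix} g_{11}(\xi^{u_i}) & g_{12}(\xi^{u_i}) \\ 0 & 0\end{bmatrix}$ with $g_{11}(\xi^{u_i})\neq 0$, so $\dim C_i+\dim D_i=1\neq 2$ and $(C_i,D_i)$ is not an LCP. In the second case, $f_j(x)\nmid f_{11}(x),f_{22}(x)$ while $f_j(x)\mid g_{11}(x)$ and $f_j(x)\mid g_{22}(x)$ (since $f_j(x)\mid l(x)=\frac{x^m-1}{\mathrm{lcm}(g_{11}(x),g_{22}(x))}$ forces $f_j(x)\nmid\mathrm{lcm}(g_{11}(x),g_{22}(x))$, i.e.\ $f_j$ divides neither... wait, I must be careful here).

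The delicate point — which I expect to be the main obstacle — is getting the divisibility bookkeeping in the second case right, because the analogue in Lemma \ref{nII}(ii) concluded only that \emph{one} of $g_{11},g_{22}$ fails to vanish, which was enough there. Here $f_j(x)\mid l(x)$ means $f_j(x)\nmid\mathrm{lcm}(g_{11}(x),g_{22}(x))$, so $f_j(x)$ divides \emph{neither} $g_{11}(x)$ nor $g_{22}(x)$; thus $g_{11}(\xi^{u_j})\neq 0\neq g_{22}(\xi^{u_j})$, giving $\dim C_j=2$. Meanwhile $f_j(x)\nmid f(x)$ together with the constraint $f(x)=\gcd(f_{11}(x),f_{22}(x))$ does \emph{not} immediately say $f_j(x)\nmid f_{11}(x),f_{22}(x)$ — it only says $f_j$ does not divide the gcd. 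But if $f_j(x)$ divided, say, $f_{11}(x)$, then since $f_j(x)\mid x^m-1$ and $f_j(x)\nmid\mathrm{lcm}(g_{11}(x),g_{22}(x))$... I would instead observe: from $f_j(x)\mid l(x)$ and $x^m-1 = f(x)f_{11}'(x)f_{22}'(x)h(x)$, if $f_j(x)$ divided $\mathrm{lcm}(f_{11}(x),f_{22}(x)) = f(x)f_{11}'(x)f_{22}'(x)$ then, since $f_j\nmid f(x)$, it would divide $f_{11}'(x)$ or $f_{22}'(x)$; I then need to rule this out using that $f_j(x)\mid l(x) = \frac{x^m-1}{\mathrm{lcm}(g_{11},g_{22})}$ and the relation between $l(x)$ and the $f$-factorization — and this is where I may need to first invoke the already-proved conditions or argue by a separate short sub-case. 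Once $f_j(x)\nmid\mathrm{lcm}(f_{11}(x),f_{22}(x))$ is secured, $\dim D_j\le 1$ while $\dim C_j=2$, so $C_j\cap D_j\neq\{0\}$ and $(C_j,D_j)$ is not an LCP. I would close by citing Theorem \ref{qclcp} to reach the contradiction, completing the proof that (III) must hold.
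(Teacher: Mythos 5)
Your overall strategy is the paper's: Lemma \ref{nIII} is proved exactly as Lemma \ref{nII} with the roles of the $g$-data and the $f$-data interchanged (the paper itself only records that the proof is similar to Lemma \ref{nII}), and your case (i) carries this out correctly. The gap is in case (ii). There you only need $\dim D_j\geq 1$, and this is immediate: since $f_j(x)\nmid f(x)=\gcd(f_{11}(x),f_{22}(x))$, the values $f_{11}(\xi^{u_j})$ and $f_{22}(\xi^{u_j})$ cannot both vanish, so $H_j$ is not the zero matrix. Combined with your (correct) observation that $f_j(x)\mid l(x)$ forces $f_j(x)\nmid g_{11}(x)$ and $f_j(x)\nmid g_{22}(x)$, hence $\dim C_j=2$, i.e.\ $C_j=F_j^2$, you get $C_j\cap D_j=D_j\neq\{0\}$, so $(C_j,D_j)$ is not an LCP and Theorem \ref{qclcp} gives the contradiction.

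Instead, you try to secure the stronger statement $f_j(x)\nmid \mathrm{lcm}(f_{11}(x),f_{22}(x))$ and admit you do not see how; in fact you cannot, because it is false in general under the case hypotheses. Nothing prevents $f_j(x)$ from dividing exactly one of $f_{11}(x),f_{22}(x)$ (say $f_j(x)\mid f_{11}'(x)$) while dividing neither $g_{11}(x)$ nor $g_{22}(x)$; then $f_j(x)\nmid f(x)$ and $f_j(x)\mid l(x)$, so you are squarely in case (ii), yet $f_j(x)\mid\mathrm{lcm}(f_{11}(x),f_{22}(x))$. (The lemma's conclusion still holds in that situation, via the dimension count above, which is why the detour is unnecessary.) Moreover, even granting the claim, the bookkeeping at the end is off: $f_j(x)\nmid\mathrm{lcm}(f_{11}(x),f_{22}(x))$ would make both diagonal entries of $H_j$ nonzero, giving $\dim D_j=2$ rather than $\dim D_j\le 1$; and ``$\dim D_j\le 1$ and $\dim C_j=2$'' by itself does not force $C_j\cap D_j\neq\{0\}$, since $D_j=\{0\}$ would make $(C_j,D_j)$ an LCP. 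Replace that whole discussion with the one-line observation that $f_j(x)\nmid\gcd(f_{11}(x),f_{22}(x))$ gives $D_j\neq\{0\}$, and the proof is complete and matches the paper's intended argument.
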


\begin{proof}
 Assume $\gcd(g_{22}'(x),f_{22}'(x), g_{11}(x)f_{12}(x)-g_{12}(x)f_{11}(x))\neq 1$. Then there exists $f_i(x)$ such that $f_i(x)\mid g_{22}'(x), f_{22}'(x)$ and $f_i(x)\mid (g_{11}(x)f_{12}(x)-g_{12}(x)f_{11}(x)) $. This implies that $g_{22}'(\xi^{u_i})=f_{22}'(\xi^{u_i})=g_{11}(\xi^{u_i})f_{12}(\xi^{u_i})-g_{12}(\xi^{u_i})f_{11}(\xi^{u_i})=0$ and $C_i$ and $D_i$ are generated by the rows of the matrices 
 $$
G_i=\begin{bmatrix}
g_{11}(\xi^{u_i}) & g_{12}(\xi^{u_i}) \\
0 & g_{22}(\xi^{u_i}) 
\end{bmatrix}    = \begin{bmatrix}
g_{11}(\xi^{u_i}) & g_{12}(\xi^{u_i}) \\
0 & 0 
\end{bmatrix}$$
$$\text{ and } H_i=\begin{bmatrix}
f_{11}(\xi^{u_i}) & f_{12}(\xi^{u_i}) \\
0 & f_{22}(\xi^{u_i}) 
\end{bmatrix}    = \begin{bmatrix}
f_{11}(\xi^{u_i}) & f_{12}(\xi^{u_i}) \\
0 & 0  
\end{bmatrix},
$$ respectively. Then $C_i\cap D_i=\{0\}$ if and only if the determinant of the matrix $ A=\begin{bmatrix}
 g_{11}(\xi^{u_i}) & g_{12}(\xi^{u_i})\\ 
 f_{11}(\xi^{u_i}) & f_{12}(\xi^{u_i})
\end{bmatrix}$ is non-zero, but $g_{11}(\xi^{u_i})f_{12}(\xi^{u_i})-g_{12}(\xi^{u_i})f_{11}(\xi^{u_i})=0$ implies that $det(A)=0$. Thus $C_i\cap D_i\neq \{0\}$. Therefore, $(C_i,D_i)$ is not an LCP of codes.
\end{proof}

\begin{lemma}\label{converse}
    If $(I),(II),(III)$ hold, then $(C,D)$ is an LCP of codes, that is $C_i\cap D_i=\{0\}$ and $\dim(C_i)+\dim(D_i)=2$ for all $1\leq i\leq t$.
\end{lemma}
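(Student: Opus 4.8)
The plan is to invoke Theorem~\ref{qclcp}: it suffices to show that for every irreducible factor $f_i$ of $x^m-1$ the constituent pair $(C_i,D_i)$ satisfies $C_i\cap D_i=\{0\}$ and $\dim C_i+\dim D_i=2$ (which together force $C_i\oplus D_i=F_i^2$). I would obtain this by a finite case analysis organized according to which of the relevant divisors $f_i$ divides.

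The first step is to convert (II) and (III) into structural information. Since $\mathrm{lcm}(f_{11},f_{22})=f\,f_{11}'f_{22}'$ and $\mathrm{lcm}(g_{11},g_{22})=g\,g_{11}'g_{22}'$, condition (II) is precisely the equality $g=h$ and condition (III) is precisely $f=l$. Substituting these into the two factorizations recorded just before the theorem gives
$$x^m-1=g\,g_{11}'g_{22}'\,l=l\,f_{11}'f_{22}'\,g,$$
and cancelling $g$ and $l$ (legitimate because $x^m-1$ is squarefree, as $\gcd(m,q)=1$) yields $g_{11}'g_{22}'=f_{11}'f_{22}'$. Hence each irreducible $f_i$ divides exactly one of the pairwise coprime polynomials $g,g_{11}',g_{22}',l$ and exactly one of $g(=h),f_{11}',f_{22}',l(=f)$: if $f_i\mid g$ then on the $D$ side it again divides $g$, if $f_i\mid l$ then it again divides $l$, and if $f_i$ divides $g_{11}'$ or $g_{22}'$ then it divides exactly one of $f_{11}',f_{22}'$. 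Condition (I) then rules out the single combination $f_i\mid g_{11}'$ together with $f_i\mid f_{11}'$, since that would put $f_i$ into $\gcd(g_{11},f_{11})$.

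The second step is to evaluate $G_i$ and $H_i$ at $\xi^{u_i}$ in each of the five surviving cases, using also $g\mid g_{12}$ and $l=f\mid f_{12}$ from Remark~\ref{remarkgcd}. If $f_i\mid g$ then $G_i=0$ and $H_i$ has both diagonal entries nonzero, so $\dim C_i=0$ and $\dim D_i=2$; the case $f_i\mid l$ is mirrored with $\dim C_i=2$, $\dim D_i=0$. If $f_i\mid g_{11}'$ (which forces $f_i\mid f_{22}'$), then $C_i=\{(0,y):y\in F_i\}$ and $D_i$ is the line through $(f_{11}(\xi^{u_i}),f_{12}(\xi^{u_i}))$ with $f_{11}(\xi^{u_i})\neq0$, so both are one-dimensional and $D_i\not\subseteq C_i$; the case $f_i\mid g_{22}'$, $f_i\mid f_{11}'$ is mirrored. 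Finally, if $f_i\mid g_{22}'$ and $f_i\mid f_{22}'$, then $C_i$ and $D_i$ are the lines through $(g_{11}(\xi^{u_i}),g_{12}(\xi^{u_i}))$ and $(f_{11}(\xi^{u_i}),f_{12}(\xi^{u_i}))$, with $g_{11}(\xi^{u_i})$ and $f_{11}(\xi^{u_i})$ both nonzero; here $C_i\cap D_i=\{0\}$ is equivalent to $g_{11}(\xi^{u_i})f_{12}(\xi^{u_i})-g_{12}(\xi^{u_i})f_{11}(\xi^{u_i})\neq0$, which is exactly what (IV) guarantees, because $g_{22}'(\xi^{u_i})=f_{22}'(\xi^{u_i})=0$ so $f_i$ cannot also divide the third entry of the $\gcd$ in (IV).

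In all five cases $\dim C_i+\dim D_i=2$ and $C_i\cap D_i=\{0\}$, hence $C_i\oplus D_i=F_i^2$ for every $i$, and Theorem~\ref{qclcp} then gives that $(C,D)$ is an LCP of codes. The step I expect to require the most care is the opening bookkeeping: recognizing that (II) and (III) are precisely the equalities $g=h$ and $f=l$, and extracting from them the two matched squarefree factorizations of $x^m-1$ together with the identity $g_{11}'g_{22}'=f_{11}'f_{22}'$. Once that is in place, the enumeration of cases and the precise points at which (I) and (IV) are needed are essentially forced.
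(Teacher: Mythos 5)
Your proposal is correct and follows essentially the same route as the paper: translate (II) and (III) into the matched factorizations of $x^m-1$, observe that each irreducible factor $f_i$ falls into one of finitely many cases according to which of $g$, $g_{11}'$, $g_{22}'$, $l$ (and correspondingly $f_{11}'$, $f_{22}'$) it divides, and check $C_i\cap D_i=\{0\}$ and $\dim C_i+\dim D_i=2$ in each case, invoking (I) and (IV) exactly where the paper does. Your explicit derivation of $g_{11}'g_{22}'=f_{11}'f_{22}'$ is a slightly cleaner way to organize the bookkeeping that the paper leaves implicit, but the argument is the same.
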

\begin{proof}
By $(II)$, we have 
\begin{align}\label{con.1}
    g(x) g(x)g_{11}'(x)g_{22}'(x)f_{11}'(x)f_{22}'(x)f(x)f(x)=(x^m-1)^2,
\end{align}
where $g(x)=\gcd(g_{11}(x),g_{22}(x))$ and $f(x)=\gcd(f_{11}(x),f_{22}(x))$.
For each $i$, the irreducible polynomial $f_i$ divides at least one of the factors $g(x)$, $g_{11}'(x)$, $g_{22}'(x)$, $f(x),f_{11}'(x),f_{22}'(x)$ of $x^m-1$, which leads to the following six cases.  
\\
$\bm{(i)}$ Let $f_i(x)\mid g(x)$. Then $f_i(x)\mid g_{11}(x),g_{22}(x),g_{12}(x)$. Also, $f_i^3(x)\nmid (x^m-1)^2$ implies that $f_i(x)\nmid f_{11}'(x),f_{22}'(x),f(x)$. It follows that  
$$
G_i=   \begin{bmatrix}
0 & 0 \\
0 & 0 
\end{bmatrix} , \text{ and }
H_i=   \begin{bmatrix}
f_{11}(\xi^{u_i}) & f_{12}(\xi^{u_i}) \\
0 & f_{22}(\xi^{u_i})  
\end{bmatrix},$$
where $f_{11}(\xi^{u_i})\neq0\neq f_{22}(\xi^{u_i})$.
It can then be readily checked that $C_i \cap D_i= \{0\}$ and $\dim(C_i)+\dim(D_i)=2$.\\
$\bm{(ii)}$  Let $f_i(x)\mid g_{11}'(x)$. Then $f_i(x)\nmid g_{22}'(x)$. Also, $f_i^3(x)\nmid (x^m-1)^2$ implies that  $f_i(x)\nmid g(x),f(x)$.  Hence, either $f_i(x)\mid f_{11}'(x)$ or $f_i(x)\mid f_{22}'(x)$. By $(I)$, $f_i(x)\nmid f_{11}(x)$ and thus $f_i(x)\mid f_{22}'(x)$. It follows that 
$$
G_i=   \begin{bmatrix}
0 & g_{12}(\xi^{u_i}) \\
0 &  g_{22}(\xi^{u_i}) 
\end{bmatrix}, \text{ and }
H_i=   \begin{bmatrix}
f_{11}(\xi^{u_i}) & f_{12}(\xi^{u_i}) \\
0 & 0  
\end{bmatrix},
$$ where $g_{22}(\xi^{u_i})\neq0\neq f_{11}(\xi^{u_i})$.
It can then be readily checked that $C_i \cap D_i= \{0\}$ and $\dim(C_i)+\dim(D_i)=2$.\\
$\bm{(iii)}$ Let $f_i(x)\mid g_{22}'(x)$.  Then $f_i(x)\nmid g(x),g_{11}'(x),f(x)$. Thus either $f_i(x)\mid f_{11}'(x)$ or $f_i(x)\mid f_{22}'(x)$. \\
1. If $f_i(x)\mid f_{11}'(x)$ then $f_i(x)\nmid f_{22}'(x)$ and thus
$$
G_i=   \begin{bmatrix}
g_{11}(\xi^{u_i}) & g_{12}(\xi^{u_i}) \\
0 &  0
\end{bmatrix} 
 \text{ and }
H_i=   \begin{bmatrix}
0 & f_{12}(\xi^{u_i}) \\
0 &   f_{22}(\xi^{u_i})
\end{bmatrix},
$$ where $g_{11}(\xi^{u_i})\neq 0\neq f_{22}(\xi^{u_i})$.
It can then be readily checked that $C_i \cap D_i= \{0\}$ and $\dim(C_i)+\dim(D_i)=2$.\\
2. If $f_i(x)\mid f_{22}'(x)$ then $f_i(x)\nmid f_{11}'(x)$ and thus
$$
G_i=   \begin{bmatrix}
g_{11}(\xi^{u_i}) & g_{12}(\xi^{u_i}) \\
0 &  0
\end{bmatrix} \text{ and }
H_i=   \begin{bmatrix}
 f_{11}(\xi^{u_i})& f_{12}(\xi^{u_i}) \\
0 &   0
\end{bmatrix}.
$$
By $(III)$, $f_i(x)\nmid (g_{11}(x)f_{12}(x)-g_{12}(x)f_{11}(x)) $ (since $f_i(x)\mid g_{22}'(x),f_{22}'(x)$). This implies that $g_{11}(\xi^{u_i})f_{12}(\xi^{u_i})-g_{12}(\xi^{u_i})f_{11}(\xi^{u_i})\neq0$. Therefore,  $C_i \cap D_i= \{0\}$ and $\dim(C_i)+\dim(D_i)=2$.\\
$\bm{(iv)}$ Let $f_i(x)\mid f(x)$. The proof is similar to the case $\bm{(i)}$ by interchanging 
%the generating polynomials corresponding to 
codes $C_i$ and $D_i$.\\
% Then  $f_i(x)\nmid g(x),g_{11}'(x),g_{22}'(x)$ and $f_i(x)\mid f_{11}(x),f_{22}(x),f_{12}(x)$.} Thus 
% $$
% G_i=   \begin{bmatrix}
% g_{11}(\xi^{u_i}) & g_{12}(\xi^{u_i}) \\
% 0 & g_{22}(\xi^{u_i}) 
% \end{bmatrix}, \text{ and }
% H_i=   \begin{bmatrix}
% 0& 0 \\
% 0 & 0 
% \end{bmatrix},
% $$ where $g_{11}(\xi^{u_i})\neq 0\neq g_{22}(\xi^{u_i})$.
% It can then be readily checked that $C_i \cap D_i= \{0\}$ and $\dim(C_i)+\dim(D_i)=2$.\\
$\bm{(v)}$ Let $f_i(x)\mid f_{11}'(x)$. The proof is similar to the case $\bm{(ii)}$ by interchanging 
%the generating polynomials corresponding to 
codes $C_i$ and $D_i$.\\
$\bm{(vi)}$ Let $f_i(x)\mid f_{22}'(x)$. The proof is similar to the case $\bm{(iii)}$ by interchanging 
%the generating polynomials corresponding to 
codes $C_i$ and $D_i$.
\end{proof}

\begin{proof}[Proof of Theorem \ref{qslcp}]
 We recall that we want to prove that $(C,D)$ is an LCP of codes if and only if the conditions (I), (II), and  (III) hold. The ``if" direction follows from Lemma \ref{converse}. The ``only if" direction follows from Lemmata \ref{nI}, \ref{nII}, and \ref{nIII}.
    
\end{proof}

\section{ LCPs of one-generator quasi-cyclic codes  of index 2}\label{secgen1lcp}

In this section, we consider one-generator quasi-cyclic codes of index $2$.
% Suppose $C$ is generated by one element 
%  $(g_{11}(x), g_{12}(x))$, where $g_{11}(x) \mid (x^m-1)$.  Let $g(x)=\gcd(g_{11}(x),g_{12}(x))$,
% $g_{11}(x)=g(x)g_{11}'(x)$, $g_{12}(x)=g(x)g_{12}'(x)$. Let 
% \[
% g_{22}(x) = \dfrac{x^m-1}{g_{11}'(x)}.
% \]
The following lemma from \cite[Lemma 2.4]{Abdukhalikov20252} shows how one-generator quasi-cyclic codes are generated by two generators satisfying Condition $(\ast)$.

\begin{lemma} \label{lemma1gen} 
Let $\gcd(q,m)=1$ and let $C$ be a quasi-cyclic code generated by one element $(g_{11}(x),g_{12}(x))$, where $g_{11}(x) \mid (x^m-1)$.  
Let $g(x)=\gcd(g_{11}(x),g_{12}(x))$, $g_{11}(x)=g(x)g_{11}'(x)$, $g_{12}(x)=g(x)g_{12}'(x)$. 
Let 
\[
g_{22}(x) = \dfrac{x^m-1}{g_{11}'(x)}.
\]
% Let $\tilde{g}_{12}(x)=g_{12}(x) \mod{g_{22}(x)}$.  
Then the code $C$ is generated by two elements $(g_{11}(x),g_{12}(x) \pmod{g_{22}(x)})$ and $(0,g_{22}(x))$ satisfying Condition $(\ast)$ with
 $\gcd(g_{11}(x),g_{22}(x))=g(x)$.
\end{lemma}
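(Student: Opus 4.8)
The plan is to verify three things for the proposed pair of generators $(g_{11}(x), g_{12}(x) \bmod g_{22}(x))$ and $(0, g_{22}(x))$: that they generate the same code $C$ as the single generator $(g_{11}(x), g_{12}(x))$, that they satisfy the three conditions in $(\ast)$, and that the gcd computation $\gcd(g_{11}(x), g_{22}(x)) = g(x)$ holds. I would handle the last two points first since they are essentially bookkeeping, and reserve the module-generation claim for last since that is where the real content lies.

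For the gcd claim, write $x^m-1 = g_{11}'(x) g_{22}(x)$ by definition of $g_{22}(x)$, so $g_{22}(x) = (x^m-1)/g_{11}'(x)$. Since $\gcd(q,m)=1$, $x^m-1$ is squarefree, hence $g_{11}'(x)$ and $g_{22}(x)$ are coprime. Now $g_{11}(x) = g(x) g_{11}'(x)$, and any common divisor of $g_{11}(x)$ and $g_{22}(x)$ must (by squarefreeness again, looking at which irreducible factors of $x^m-1$ occur) divide $g(x)$; conversely $g(x) \mid g_{11}(x)$ and $g(x) \mid x^m-1$ with $\gcd(g(x), g_{11}'(x)) = 1$ forces $g(x) \mid g_{22}(x)$. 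This gives $\gcd(g_{11}(x), g_{22}(x)) = g(x)$. For $(\ast)$: $g_{11}(x) \mid x^m-1$ is a hypothesis; $g_{22}(x) \mid x^m-1$ is immediate from $x^m-1 = g_{11}'(x) g_{22}(x)$; the degree condition $\deg(g_{12}(x) \bmod g_{22}(x)) < \deg g_{22}(x)$ is automatic from taking the remainder; and the divisibility condition $g_{11}(x) g_{22}(x) \mid (x^m-1)(g_{12}(x) \bmod g_{22}(x))$ I would check via Remark \ref{remarkgcd}, which reduces it (using $\gcd(q,m)=1$) to $\gcd(g_{11}(x), g_{22}(x)) \mid (g_{12}(x) \bmod g_{22}(x))$; since $g(x) \mid g_{12}(x)$ and $g(x) \mid g_{22}(x)$, we get $g(x) \mid (g_{12}(x) \bmod g_{22}(x))$, and by the previous paragraph $g(x) = \gcd(g_{11}(x), g_{22}(x))$, so this holds.

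The substantive step is showing that $(0, g_{22}(x))$ lies in the one-generator code $C = \langle (g_{11}(x), g_{12}(x)) \rangle \subseteq R^2$, and conversely that $(g_{11}(x), g_{12}(x))$ is recovered from the two new generators (the latter is trivial: $(g_{11}(x), g_{12}(x))$ equals $(g_{11}(x), g_{12}(x) \bmod g_{22}(x))$ plus a polynomial multiple of $(0, g_{22}(x))$, once we know $(0,g_{22}(x)) \in C$, or even just note that $g_{12}(x) \equiv g_{12}(x) \bmod g_{22}(x)$ modulo the second generator). To show $(0, g_{22}(x)) \in C$, I would exhibit a polynomial $a(x)$ with $a(x) g_{11}(x) \equiv 0$ and $a(x) g_{12}(x) \equiv g_{22}(x)$ in $R = \mathbb{F}_q[x]/\langle x^m-1 \rangle$. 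The natural candidate annihilating the first coordinate is $a(x) = (x^m-1)/g_{11}(x) = g_{22}(x)/g(x)$ (using $g_{22}(x) = (x^m-1)/g_{11}'(x)$ and $g_{11}(x) = g(x)g_{11}'(x)$, so $(x^m-1)/g_{11}(x) = g_{22}(x)/g(x)$ — but one must check $g(x) \mid g_{22}(x)$, which we did). With this choice, $a(x) g_{12}(x) = (g_{22}(x)/g(x)) \cdot g(x) g_{12}'(x) = g_{22}(x) g_{12}'(x)$ in $R$, which is not quite $g_{22}(x)$ unless $g_{12}'(x)$ is a unit modulo the relevant factor. The fix is to use $\gcd(g_{12}'(x), g_{22}(x)) = 1$: since $\gcd(g_{11}(x),g_{12}(x)) = g(x)$, we have $\gcd(g_{11}'(x), g_{12}'(x)) = 1$, and $g_{12}'(x)$ shares no factor with $g(x)$ either (else that factor would divide $\gcd(g_{11},g_{12})/g = $ a unit); combined with squarefreeness of $x^m-1$ and the factorization $x^m-1 = g(x) g_{11}'(x) g_{22}(x)/g(x) \cdot(\dots)$ — more carefully, $g_{22}(x) \mid x^m-1$ and every irreducible factor of $g_{22}(x)$ either divides $g(x)$ or not — one checks $\gcd(g_{12}'(x), g_{22}(x))$ divides a unit, hence $g_{12}'(x)$ is invertible in $\mathbb{F}_q[x]/\langle g_{22}(x)\rangle$. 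Then choosing $a(x) = (g_{22}(x)/g(x)) \cdot (g_{12}'(x))^{-1}$ lifted appropriately (where the inverse is taken modulo $g_{22}(x)/g(x) = g_{11}'(x)$-complementary part) gives $a(x) g_{11}(x) \equiv 0$ and $a(x) g_{12}(x) \equiv g_{22}(x)$ in $R$.

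The main obstacle I anticipate is precisely this last invertibility/CRT juggling: one must be careful about which modulus the inverse of $g_{12}'(x)$ is taken with respect to, and verify that lifting it back does not disturb the congruence $a(x) g_{11}(x) \equiv 0 \pmod{x^m-1}$. The cleanest way to manage this is to work factor-by-factor via the CRT decomposition $R \cong \bigoplus_i F_i$ already set up in the preliminaries: on each constituent $F_i$, check that the claimed identities hold (they become statements about whether certain evaluations $g_{11}(\xi^{u_i})$, $g_{12}(\xi^{u_i})$, $g_{22}(\xi^{u_i})$ vanish), split into the cases $f_i \mid g(x)$, $f_i \mid g_{11}'(x)$, $f_i \mid g_{22}(x)/g(x)$, and in each case the required element $a(x)$ exists locally; then patch via CRT. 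This avoids any delicate global inverse computation and reduces everything to the squarefreeness of $x^m - 1$ together with the coprimality relations among $g(x)$, $g_{11}'(x)$, $g_{12}'(x)$.
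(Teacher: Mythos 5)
First, note that the paper does not actually prove this lemma: it is quoted from \cite[Lemma 2.4]{Abdukhalikov20252}, so your proposal can only be judged on its own terms. The bookkeeping parts are fine: $\gcd(g_{11}(x),g_{22}(x))=g(x)$ and the three conditions in $(\ast)$ follow exactly as you say from squarefreeness of $x^m-1$ and Remark \ref{remarkgcd}. The problem is in your explicit construction of $a(x)$ with $a(x)(g_{11}(x),g_{12}(x))=(0,g_{22}(x))$. The claim that ``$g_{12}'(x)$ shares no factor with $g(x)$'', and hence that $\gcd(g_{12}'(x),g_{22}(x))=1$ so that $g_{12}'(x)$ is invertible modulo $g_{22}(x)$, is false in general: take $q=3$, $m=4$, $g_{11}(x)=x^2-1$, $g_{12}(x)=(x-1)^2$; then $g(x)=x-1$, $g_{11}'(x)=x+1$, $g_{12}'(x)=x-1$, $g_{22}(x)=(x-1)(x^2+1)$, and $\gcd(g_{12}'(x),g_{22}(x))=x-1\neq 1$. (Dividing out $g$ from $g_{12}$ does not remove repeated occurrences of factors of $g$ in $g_{12}$.) The modulus you name for the inverse (``modulo $g_{22}(x)/g(x)$'') is also not the right one, and in the same example $\gcd(g_{12}'(x),g_{22}(x)/g(x))$ can likewise fail to be $1$ for other choices.

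The correct statement, which is all you need, is that $g_{12}'(x)$ is invertible modulo $g_{11}'(x)=(x^m-1)/g_{22}(x)$, since $\gcd(g_{11}'(x),g_{12}'(x))=1$ follows from $\gcd(g_{11}(x),g_{12}(x))=g(x)$. Choosing $b(x)$ with $g_{12}'(x)b(x)\equiv 1 \pmod{g_{11}'(x)}$ and setting $a(x)=\frac{x^m-1}{g_{11}(x)}\,b(x)=\frac{g_{22}(x)}{g(x)}\,b(x)$ gives $a(x)g_{11}(x)\equiv 0$ and $a(x)g_{12}(x)=g_{22}(x)g_{12}'(x)b(x)\equiv g_{22}(x) \pmod{x^m-1}$, which is the element you were after; the rest of your argument (recovering $(g_{11},g_{12})$ from the two new generators and vice versa) then goes through. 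Your fallback CRT sketch at the end is in fact sound: in the three cases $f_i\mid g$, $f_i\mid g_{11}'$, $f_i\mid g_{22}/g$ the local component of $a$ exists, and in the middle case the needed nonvanishing of $g_{12}(\xi^{u_i})$ uses precisely $f_i\nmid g$ and $\gcd(g_{11}',g_{12}')=1$ — not the false global coprimality with $g_{22}$. So the proposal is repairable with a one-line fix (or by executing the CRT case analysis), but as written the central invertibility step is incorrect.
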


From the above Lemma \ref{lemma1gen}, if $C$ is one-generator quasi-cyclic code of index $2$, then we can assume that $C$ is   generated by two elements $(g_{11}(x), \tilde{g}_{12}(x))$ and $(0, g_{22}(x))$ satisfying Condition $(\ast)$, where 
  $\tilde{g}_{12}(x)=g_{12}(x) \pmod{g_{22}(x)}$. Furthermore, also by Lemma \ref{lemma1gen} we have that  $\gcd(g_{11}(x),g_{22}(x))=g(x)$.  In addition,  $(0,g_{22}(x))\in C$ (see proof of \cite[Lemma 2.4]{Abdukhalikov20252}). It follows that  code $C$ generated by the element $(g_{11}(x),g_{12}(x))$ is the same as the code generated by the element $(g_{11}(x),\tilde{g}_{12}(x))$. Thus, without loss of generality, from now on, we will write $g_{12}(x)$ instead of $\tilde{g}_{12}(x)$.
  We have 
  \begin{equation}\label{1geneq}
      \text{lcm}(g_{11}(x),g_{22}(x)) = \dfrac{g_{11}(x)\cdot g_{22}(x)}{\gcd(g_{11}(x),g_{22}(x))} = \dfrac{g(x) \cdot g_{11}'(x) \cdot (x^m-1)}{g(x) \cdot g_{11}'(x)} = x^m-1.
  \end{equation}
% \begin{lemma}
%     Let $g(x)=1$. Then the following are equivalent
%     \begin{enumerate}
%     \item $\gcd(f_{11}(x),g_{11}(x))=1$
%         \item $\gcd(g_{11}(x),g_{11}(x)f_{12}(x)-g_{12}(x)f_{11}(x))=1$
%     \end{enumerate}
% \end{lemma}

% \begin{lemma}
%     Let $f(x)=1$. Then the following are equivalent
%     \begin{enumerate}
%     \item $\gcd(f_{11}(x),g_{11}(x))=1$
%         \item $\gcd(f_{11}(x),g_{11}(x)f_{12}(x)-g_{12}(x)f_{11}(x))=1$
%     \end{enumerate}
% \end{lemma}

\begin{lemma}\label{gen1lemma}
    Let $g_{11}(x)$, $g_{12}(x)$, $f_{11}(x)$, and $f_{12}(x)$ in $\mathbb{F}_q[x]$ be such that  $g_{11}(x)$ and $f_{11}(x)$ divide $x^m-1$. Let $\gcd(g_{11}(x),g_{12}(x))=1$, $\gcd(f_{11}(x),f_{12}(x))=1$ and $\gcd(f_{11}(x),g_{11}(x))=1$. Then the following are equivalent.
    \begin{enumerate}[(1)]
        \item $\gcd\left (\frac{x^m-1}{f_{11}(x)},\frac{x^m-1}{g_{11}(x)},g_{11}(x)f_{12}(x)-g_{12}(x)f_{11}(x)\right )=1$.
        \item $\gcd(x^m-1,g_{11}(x)f_{12}(x)-g_{12}(x)f_{11}(x) )=1$.
    \end{enumerate}
\end{lemma}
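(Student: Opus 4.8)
The plan is to work irreducible-factor by irreducible-factor, using the hypothesis $\gcd(m,q)=1$, which guarantees that $x^m-1$ is squarefree. Write $x^m-1 = f_1(x)\cdots f_t(x)$ as a product of distinct irreducible polynomials. Since the modulus $x^m-1$ is squarefree, for an irreducible factor $f_i$ the condition $f_i(x)\mid \frac{x^m-1}{g_{11}(x)}$ is equivalent to $f_i(x)\nmid g_{11}(x)$, and likewise $f_i(x)\mid\frac{x^m-1}{f_{11}(x)}$ is equivalent to $f_i(x)\nmid f_{11}(x)$. Hence statement $(1)$ fails precisely when there exists an irreducible factor $f_i$ of $x^m-1$ with $f_i\nmid g_{11}$, $f_i\nmid f_{11}$, and $f_i\mid g_{11}f_{12}-g_{12}f_{11}$; statement $(2)$ fails precisely when there exists an irreducible factor $f_i$ of $x^m-1$ with $f_i\mid g_{11}f_{12}-g_{12}f_{11}$. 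So it suffices to show that these two failure conditions are equivalent.

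The implication ``$(1)$ fails $\Rightarrow$ $(2)$ fails'' is immediate, since any $f_i$ witnessing the failure of $(1)$ is in particular a common factor of $x^m-1$ and $g_{11}f_{12}-g_{12}f_{11}$. For the converse, suppose $f_i$ is an irreducible factor of $x^m-1$ dividing $g_{11}f_{12}-g_{12}f_{11}$; I claim $f_i\nmid f_{11}$ and $f_i\nmid g_{11}$. Indeed, if $f_i\mid f_{11}$, then from $f_i\mid g_{11}f_{12}-g_{12}f_{11}$ we get $f_i\mid g_{11}f_{12}$, so by irreducibility $f_i\mid g_{11}$ or $f_i\mid f_{12}$; but $\gcd(f_{11},g_{11})=1$ rules out $f_i\mid g_{11}$ and $\gcd(f_{11},f_{12})=1$ rules out $f_i\mid f_{12}$, a contradiction. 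Symmetrically, if $f_i\mid g_{11}$, then $f_i\mid g_{12}f_{11}$, so $f_i\mid g_{12}$ or $f_i\mid f_{11}$, contradicting $\gcd(g_{11},g_{12})=1$ and $\gcd(f_{11},g_{11})=1$ respectively. Thus $f_i\nmid f_{11}$ and $f_i\nmid g_{11}$, so by squarefreeness $f_i\mid\frac{x^m-1}{f_{11}}$ and $f_i\mid\frac{x^m-1}{g_{11}}$, and $f_i$ witnesses the failure of $(1)$.

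Combining the two implications, $(1)$ fails if and only if $(2)$ fails, which is exactly the asserted equivalence. I expect the only point requiring care (rather than a genuine obstacle) is making the squarefreeness argument precise when translating between divisibility by $\frac{x^m-1}{g_{11}}$ and non-divisibility by $g_{11}$; everything else is a short application of the three coprimality hypotheses together with irreducibility of the $f_i$. It may be cleanest to phrase the whole argument contrapositively, i.e., to prove directly that $\gcd\!\bigl(\frac{x^m-1}{f_{11}},\frac{x^m-1}{g_{11}}\bigr)$ shares an irreducible factor with $g_{11}f_{12}-g_{12}f_{11}$ if and only if $x^m-1$ does, which is what the paragraph above does.
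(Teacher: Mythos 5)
Your proposal is correct and follows essentially the same route as the paper: the direction $(2)\Rightarrow(1)$ is immediate, and the converse is proved contrapositively by taking an irreducible common factor $f_i$ of $x^m-1$ and $g_{11}f_{12}-g_{12}f_{11}$ and using the three coprimality hypotheses together with irreducibility to rule out $f_i\mid g_{11}$ and $f_i\mid f_{11}$, so that $f_i$ divides both $\frac{x^m-1}{g_{11}}$ and $\frac{x^m-1}{f_{11}}$. Your extra care about squarefreeness (from the standing assumption $\gcd(m,q)=1$) is consistent with, and if anything slightly more explicit than, the paper's argument.
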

\begin{proof}
    $(2)$ implies $(1)$ trivially. Let $\gcd(x^m-1,g_{11}(x)f_{12}(x)-g_{12}(x)f_{11}(x) )\neq 1$. Then there exists irreducible $f_i(x)$ such that $f_i(x)$ divides both $x^m-1$ and $g_{11}(x)f_{12}(x)-g_{12}(x)f_{11}(x)$. Assume that $f_i(x)|g_{11}(x)$. Then $f_i(x)|g_{12}(x)f_{11}(x)$, which implies that $f_i(x)|g_{12}(x)$ or $f_i(x)|f_{11}(x)$, which is not possible. Thus $f_i(x)\nmid g_{11}(x)$. Similarly, $f_i(x)\nmid f_{11}(x)$. Hence, $\gcd\left (\frac{x^m-1}{f_{11}(x)},\frac{x^m-1}{g_{11}(x)},g_{11}(x)f_{12}(x)-g_{12}(x)f_{11}(x)\right )\neq 1$. This completes the proof.   
\end{proof}

\begin{theorem}\label{gen1lcp}
Let $C$ and $D$ be one-generator quasi-cyclic codes generated by $(g_{11}(x),g_{12}(x))$ and $(f_{11}(x),f_{12}(x))$, respectively, where $g_{11}(x),f_{11}(x) \mid (x^m-1)$. Then $(C,D)$ is an LCP of codes if and only if $\gcd (x^m-1, g_{11}(x)f_{12}(x)-g_{12}(x)f_{11}(x))=1$.
\end{theorem}
\begin{proof}
Let $g(x)=\gcd(g_{11}(x),g_{12}(x))$, $g_{11}(x)=g(x)g_{11}'(x)$, $f(x)=\gcd(f_{11}(x),f_{12}(x))$, and $f_{11}(x)=f(x)f_{11}'(x)$. 
 Then  by Lemma \ref{lemma1gen}, we can assume that $C$ is generated by two elements $(g_{11}(x),g_{12}(x))$ and $(0,g_{22}(x))$, and $D$ is generated by two elements $(f_{11}(x),f_{12}(x))$ and $(0,f_{22}(x))$, where 
  \begin{equation*}
      g_{22}(x)=\frac{x^m-1}{g_{11}'(x)} \text{ and }
f_{22}(x)=\frac{x^m-1}{f_{11}'(x)}.
\end{equation*} It is easy to see the generating elements for $C$ and $D$ satisfy Condition $(\ast)$ and $x^m-1=\text{lcm}(g_{11}(x),g_{22}(x))=\text{lcm}(f_{11}(x),f_{22}(x))$ (see Eq. \ref{1geneq}). By Theorem \ref{qslcp}, $(C,D)$ is an LCP of codes if and only if 
\begin{enumerate}
       \item[\textit{(i)}] $\gcd(f_{11}(x),g_{11}(x))=1$.
		\item[\textit{(ii)}] $g(x)=1$ and $f(x)=1$.
		\item[\textit{(iii)}] $\gcd\left (\frac{x^m-1}{g_{11}(x)},\frac{x^m-1}{f_{11}(x)}, g_{11}(x)f_{12}(x)-g_{12}(x)f_{11}(x)\right)=1$.
	\end{enumerate}
We show that $(i),(ii),(iii)$ are equivalent to \begin{align}\label{onelcp}
		\gcd (x^m-1, g_{11}(x)f_{12}(x)-g_{12}(x)f_{11}(x))=1.
	\end{align}
 First, assume that $(i)$, $(ii)$, and $(iii)$ hold. Then,  by Lemma \ref{gen1lemma}, $\gcd (x^m-1, g_{11}(x)f_{12}(x)-g_{12}(x)f_{11}(x))=1$ holds. 

Conversely, if $\gcd(g_{11}(x),f_{11}(x))\neq 1$ or $g(x)\neq 1$ or $f(x)\neq 1$, then $\gcd(x^m-1,g_{11}(x)f_{12}(x)-g_{12}(x)f_{11}(x))\neq 1$. Hence, Eq. \ref{onelcp} implies $(i)$, and $(ii)$. Moreover, by Lemma \ref{gen1lemma}, Eq. \ref{onelcp} along with  $(i)$ and $(ii)$, implies $(iii)$.  This completes the proof.
% This completes the proof.
\end{proof}

\begin{remark}
	The necessary and sufficient condition in Theorem \ref{gen1lcp}, $\gcd(x^m-1,g_{11}(x)f_{12}(x)-g_{12}(x)f_{11}(x))= 1$, can be viewed as an analogue of the vector space case. Let $C$ and $D$ be one-generator quasi-cyclic codes generated by $(g_{11}(x),g_{12}(x))$ and $(f_{11}(x),f_{22}(x))$, respectively. Then $(C,D)$ is an LCP of codes if and only if the $2\times 2$ matrix 
		\begin{align*}
			\begin{bmatrix}
				g_{11}(x) & g_{12}(x)\\
				f_{11}(x) & f_{12}(x)				
			\end{bmatrix}
		\end{align*}
		is invertible.
\end{remark}

% In \cite{Aliabadi2023}, the authors called a code $C$ maximal if $\dim (C)=m$. Recall that if $C$ is a one-generator quasi-cyclic code of length $2m$ and index $2$, generated by $(g_{11}(x),g_{12}(x))$ with $g_{11}(x)\mid (x^m-1)$, then $\dim(C)=m-\deg \ g(x)$, where $g(x)=\gcd(g_{11}(x),g_{12}(x))$.
% \begin{corollary}\cite[Proposition 3.6]{Aliabadi2023}
% Let $C=\langle (g_{11}(x),g_{12}(x))\rangle $ and $D=\langle (f_{11}(x),f_{12}(x))\rangle$ be maximal one-generator quasi-cyclic codes of index $2$, where $g_{11}(x),f_{11}(x) \mid (x^m-1)$. Then $(C,D)$ is an LCP of codes if and only if 
% $\gcd (x^m-1, g_{11}(x)f_{12}(x)-g_{12}(x)f_{11}(x))=1$.
% \end{corollary}

% \begin{remark}
%    Recall that if $C$ is one generator quasi-cyclic code of length $2m$ and index $2$, generated by $(g_{11}(x),g_{12}(x))$ with $g_{11}(x)\mid (x^m-1)$. Then $\dim(C)=m-\deg \ g(x)$, where $g(x)=\gcd(g_{11}(x),g_{12}(x)$. In \cite[Proposition 3.6]{Aliabadi2023}, the authors called code $C$ maximal if $\dim (C)=m$, that is, $g(x)=1$; and they described one generator maximal quasi-cyclic LCP codes of index $2$.
% \end{remark} 
Let $R=\mathbb{F}_q[x]/\langle x^m-1\rangle$. Double circulant (DC) codes are one-generator quasi-cyclic codes of index $2$,   generated by $\langle (1,a(x))\rangle\in R^2$. In the next result, we show that Theorem \ref{gen1lcp} generalizes \cite[Proposition 3.2]{carlet2018}. 
\begin{corollary}
  Let $C=\langle (1,a(x))\rangle$ and $D=\langle (1,b(x))\rangle$  be DC codes in $R^2$. Then $(C,D)$ is an LCP of codes if and only
 if $\gcd(a(x)-b(x),x^m-1)=1$.
\end{corollary}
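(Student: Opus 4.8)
The plan is to obtain this as an immediate specialization of Theorem~\ref{gen1lcp} (equivalently, of the maximal-code corollary immediately above), since a DC code $\langle(1,a(x))\rangle\in R^2$ is precisely a one-generator quasi-cyclic code of index $2$ with first coordinate generator $g_{11}(x)=1$.

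First I would set $g_{11}(x)=1$, $g_{12}(x)=a(x)$, $f_{11}(x)=1$, $f_{12}(x)=b(x)$, and note that $g_{11}(x)=f_{11}(x)=1$ trivially divides $x^m-1$, so the hypotheses of Theorem~\ref{gen1lcp} are satisfied. Next I would read off the three conditions there: condition~(A) becomes $\gcd(1,a(x))=1$ and condition~(B) becomes $\gcd(1,b(x))=1$, both of which hold automatically; condition~(C) becomes $\gcd\bigl(x^m-1,\ 1\cdot b(x)-a(x)\cdot 1\bigr)=1$, i.e. $\gcd(x^m-1,b(x)-a(x))=1$. Since $b(x)-a(x)$ and $a(x)-b(x)$ differ only by the unit $-1$, this gcd equals $\gcd(x^m-1,a(x)-b(x))$, which is exactly the claimed condition. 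Therefore $(C,D)$ is an LCP of codes if and only if $\gcd(a(x)-b(x),x^m-1)=1$. As an alternative route, one can observe that $\dim C=m-\deg\gcd(1,a(x))=m$, so $C$ and $D$ are maximal one-generator quasi-cyclic codes of index $2$, and then invoke the preceding corollary (Proposition 3.6 of \cite{Aliabadi2023}) with $g_{11}(x)=f_{11}(x)=1$.

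There is essentially no obstacle: the statement is a routine specialization. The only points deserving a moment's care are checking that conditions~(A) and~(B) of Theorem~\ref{gen1lcp} are vacuous in the DC setting and that replacing $a(x)-b(x)$ by $b(x)-a(x)$ leaves the gcd unchanged; one may also remark in passing that this recovers \cite[Proposition 3.2]{carlet2018}.
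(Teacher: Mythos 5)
Your proposal is correct and follows exactly the paper's own route: specialize Theorem~\ref{gen1lcp} with $g_{11}(x)=f_{11}(x)=1$, $g_{12}(x)=a(x)$, $f_{12}(x)=b(x)$, where conditions (A) and (B) are vacuous and (C) reduces to $\gcd(x^m-1,a(x)-b(x))=1$. Nothing to add.
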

\begin{proof}
    Take $g_{11}(x)=1, g_{12}(x)=a(x), f_{11}(x)=1, f_{12}(x)=b(x)$. Then, the proof follows from Theorem \ref{gen1lcp}.
\end{proof}

\section{LCPs of quasi-twisted codes of index 2}\label{secqtlcp}
In this section, we characterize the LCP of quasi-twisted codes of index $2$. First, we describe the structure of quasi-twisted codes of index $2$ analogous to that of quasi-cyclic codes. 
\begin{theorem} \label{qt2}
Let $C$ be a $\lambda$-quasi-twisted code of length $2m$ and index $2.$ 
Then $C$ is generated by two elements $(g_{11}(x), g_{12}(x))$ and $(0, g_{22}(x))$ such that they satisfy the following conditions: 
\begin{gather}  
g_{11}(x) \mid (x^m-\lambda) \text{ and } g_{22}(x) \mid(x^m-\lambda), \notag \\
\deg g_{12}(x) < \deg g_{22}(x), \tag{$\ast\ast$}\\
g_{11}(x)g_{22}(x) \mid (x^m-\lambda)g_{12}(x). \notag
\end{gather}
 Moreover, in this case $\dim (C) = 2m-\deg g_{11}(x)-\deg g_{22}(x)$.
\end{theorem}
\begin{proof}
In \cite{lally2001} quasi-cyclic codes (i.e., quasi-twisted codes with $\lambda=1$) were studied using Gr\"{o}bner bases of modules. 
We note that the results of \cite{lally2001}  are valid for any $\lambda$ (all proofs are valid if we take $x^m-\lambda$ instead of 
$x^m-1$). 
Hence, by  \cite{lally2001}   we can assume that $C$ is generated by two elements $g_1=\big( g_{11}(x),g_{12}(x)\big)$ and 
$g_2=\big( 0,g_{22}(x)\big)$, satisfying the conditions 
$g_{11}(x)\mid (x^m-\lambda)$, $g_{22}(x)\mid (x^m-\lambda)$, $\deg g_{12} (x) < \deg g_{22}(x)$, 
and $\dim (C) = 2m- \deg g_{11}(x) - \deg g_{22}(x)$. 
The existence of such generators is equivalent \cite{lally2001} to the existence of a $2\times 2$ polynomial matrix 
$(a_{ij})$ such that 

$$
\left(
\begin{array}{cc}
 a_{11} &  a_{12}  \\
 a_{21} &  a_{22} 
\end{array}
\right)
\left(
\begin{array}{cc}
 g_{11} &  g_{12}  \\
 0 &  g_{22} 
\end{array}
\right) = 
\left(
\begin{array}{cc}
 x^m-\lambda&  0  \\
 0 &  x^m-\lambda
\end{array}
\right).
$$
Then $a_{21}=0$,  $a_{11}=\frac{x^m-\lambda}{g_{11}(x)}$,  $a_{22}=\frac{x^m-\lambda}{g_{22}(x)}$,  and 
$\frac{x^m-\lambda}{g_{11}(x)} g_{12} + a_{12} g_{22} =0$, which implies  $g_{11}(x)g_{22}(x) \mid  (x^m-\lambda) g_{12}(x)$.     
\end{proof}

\begin{remark} \label{remarkqtgcd}   If $\gcd(q,m)=1$, then the condition 
$$g_{11}(x)g_{22}(x) \mid (x^m-\lambda)g_{12}(x)$$ 
in Theorem \ref{qt2} is equivalent to the condition $\gcd(g_{11}(x),g_{22}(x)) \mid g_{12}(x)$, since $x^m-\lambda$ has no multiple roots.
\end{remark}

Recall that a $\lambda$-quasi-twisted code of length $\ell m$ and index $\ell$ is an $R_\lambda$-submodule of $R_\lambda^\ell$. Let $x^m-\lambda=f_1(x)f_2(x)\dots f_s(x)$ be its irreducible factorization over $\mathbb{F}_q$.  Then, we can decompose quasi-twisted codes into their constituent codes analogous to the quasi-cyclic case (for details, see \cite{jia2012}).  A  quasi-twisted code $C$ of index $\ell$ can be  decomposed as 
\begin{equation*}
C \cong \bigoplus_{i=1}^{s} C_i. 
\end{equation*}

In the next theorem, we state a characterization of an LCP of quasi-twisted codes. The proof is similar to that for an LCP of quasi-cyclic codes given in \cite{carlet2018}. We omit the proof. 

\begin{theorem} \label{qtclcp} Let $C$ and $D$ be  $q$-ary $\lambda$-QT codes of length $m\ell$ and index $\ell$ with  CRT decompositions  $C \cong \bigoplus_{i=1}^{s} C_i$  and $D \cong \bigoplus_{i=1}^{s} D_i$, respectively.
Then $(C, D)$ is an LCP of quasi-twisted codes  if and only if 
$(C_i, D_i)$ is an LCP of codes for all $1\leq i\leq s$. 
 \end{theorem}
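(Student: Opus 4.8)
The plan is to mimic the CRT-decomposition argument used for quasi-cyclic codes in \cite[Theorem 3.1]{carlet2018} (reproduced above as Theorem \ref{qclcp}), adapting it to the constacyclic modulus $x^m-\lambda$. The starting point is the observation, already recorded in the preliminaries and in the proof of Theorem \ref{qt2}, that since $\gcd(m,q)=1$ the polynomial $x^m-\lambda$ is squarefree, so $R_\lambda=\mathbb{F}_q[x]/\langle x^m-\lambda\rangle \cong \bigoplus_{i=1}^{s} F_i$ with $F_i=\mathbb{F}_q[x]/\langle f_i(x)\rangle$, and this extends coordinatewise to a ring isomorphism $R_\lambda^\ell \cong \bigoplus_{i=1}^{s} F_i^\ell$. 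Under this isomorphism a $\lambda$-quasi-twisted code $C$, being an $R_\lambda$-submodule of $R_\lambda^\ell$, decomposes as $C\cong\bigoplus_i C_i$ where $C_i$ is an $F_i$-subspace of $F_i^\ell$, and likewise $D\cong\bigoplus_i D_i$.

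The key steps, in order, are as follows. First I would note that the ambient space decomposes as $R_\lambda^\ell \cong \bigoplus_{i=1}^{s} F_i^\ell$, and that this decomposition is compatible with the ones for $C$ and $D$, i.e.\ the image of $C$ (resp.\ $D$) in $\bigoplus_i F_i^\ell$ is exactly $\bigoplus_i C_i$ (resp.\ $\bigoplus_i D_i$). Second, since the isomorphism $R_\lambda^\ell \to \bigoplus_i F_i^\ell$ is an $\mathbb{F}_q$-linear bijection, both "$C\cap D=\{0\}$" and "$C+D=R_\lambda^\ell$" transfer verbatim to the decomposed side. Third, I would use the elementary fact that for subspaces sitting inside a direct sum of vector spaces, $\left(\bigoplus_i C_i\right)\cap\left(\bigoplus_i D_i\right)=\bigoplus_i (C_i\cap D_i)$ and $\left(\bigoplus_i C_i\right)+\left(\bigoplus_i D_i\right)=\bigoplus_i (C_i+D_i)$, because the sum/intersection operations are performed componentwise in a direct product. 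Hence $C\oplus D=R_\lambda^\ell$ holds if and only if $C_i\cap D_i=\{0\}$ and $C_i+D_i=F_i^\ell$ for every $i$, which is precisely the statement that each $(C_i,D_i)$ is an LCP of codes over $F_i$.

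There is essentially no serious obstacle here; the proof is the same bookkeeping as in the quasi-cyclic case, the only change being that $x^m-1$ is replaced by $x^m-\lambda$ throughout, and squarefreeness (hence the validity of CRT) is still guaranteed by the standing hypothesis $\gcd(m,q)=1$. The one point that deserves an explicit sentence is that one should view $C_i$ and $D_i$ as $F_i$-linear codes of length $\ell$ but compare them as $\mathbb{F}_q$-subspaces when invoking the direct-sum identities — though since $F_i$-subspaces are in particular $\mathbb{F}_q$-subspaces and $F_i^\ell$ is an $F_i$-space, "LCP over $F_i$" and the componentwise condition coincide, so no subtlety actually arises. Given that the authors write "the proof is similar to that for an LCP of quasi-cyclic codes given in \cite{carlet2018}" and "we omit the proof," I would simply present the above three-step reduction compactly, pointing to Theorem \ref{qclcp} as the model.
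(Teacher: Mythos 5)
Your proposal is correct and follows essentially the same route the paper intends: the authors omit the proof precisely because it is the CRT-decomposition bookkeeping of \cite[Theorem 3.1]{carlet2018} with $x^m-1$ replaced by the (still squarefree, since $\gcd(m,q)=1$) modulus $x^m-\lambda$, which is exactly the three-step reduction you give. The only point worth keeping explicit in a write-up is the one you already flag, namely that the CRT isomorphism carries the $R_\lambda$-submodules $C$ and $D$ onto $\bigoplus_i C_i$ and $\bigoplus_i D_i$, after which the componentwise identities for intersections and sums finish the argument.
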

%  \begin{proof}
% As $C$ and $D$ are $R_\lambda$-submodule of $R_\lambda^\ell$, $(C,D)$ is LCP implies that 
% $$C\oplus D=R_\lambda^\ell.$$
% Consequently,
% \begin{equation}\label{thqt}
%     \left (\bigoplus_{i=1}^{s} C_i \right )\oplus \left (\bigoplus_{i=1}^{s} D_i\right )=\bigoplus_{i=1}^sF_i^\ell.
%     \end{equation}
% That is,    $$\bigoplus_{i=1}^s(C_i+D_i)=\bigoplus_{i=1}^sF_i^\ell.$$ This implies that $C_i+D_i=F_i^\ell$. Therefore, $\ell=\dim(C_i+D_i)=\dim(C_i)+\dim(D_i)-\dim(C_i\cap D_i)$.
%  \end{proof}

% Let $C$ and $D$ be two quasi-twisted codes of length $2m$ and index $2$, generated by the rows of the matrices $G$ and $H$, respectively, where 
% \begin{equation*}
%     G= \begin{bmatrix}
%     g_{11}(x)&g_{12}(x)\\
%     0& g_{22}(x)
% \end{bmatrix} \text{ and } H= \begin{bmatrix}
%     f_{11}(x)&f_{12}(x)\\
%     0& f_{22}(x)
% \end{bmatrix}
% \end{equation*}
% and $f_{ij}(x)$, $g_{ij}(x)$ satisfying Theorem \ref{qt2}.

% Then $C_i$ and $D_i$ are generated by the rows of the matrices
% $$
% G_i=  \begin{bmatrix}
% g_{11}(\eta^{v_i}) & g_{12}(\eta^{v_i}) \\
% 0 & g_{22}(\eta^{v_i}) 
% \end{bmatrix} \text{ and }
% H_i=  \begin{bmatrix}
% f_{11}(\eta^{u_i}) & f_{12}(\eta^{v_i}) \\
% 0 & f_{22}(\eta^{v_i}) 
% \end{bmatrix},$$
% respectively. 
Let $g(x)=\gcd(g_{11}(x),g_{22}(x))$ and $f(x)=\gcd(f_{11}(x),f_{22}(x))$. Then, by Remark \ref{remarkqtgcd}, $g(x)\mid g_{12}(x)$ and $f(x)\mid f_{12}(x)$. Let $g_{11}(x)=g(x)   g_{11}'(x), g_{22}(x)=g(x) g_{22}'(x)$, and $f_{11}(x)=f(x)   f_{11}'(x), f_{22}(x)=f(x) f_{22}'(x)$.
 Next, we provide another characterization of two-generator and one-generator LCPs of quasi-twisted codes of index $2$. 

\begin{theorem}\label{qtslcp}
Let $C$ and $D$ be  $\lambda$-quasi-twisted codes of length $2m$ and index $2$ generated by $(g_{11}(x),g_{12}(x))$, $(0,g_{22}(x))$ and $(f_{11}(x),f_{12}(x))$, $(0,f_{22}(x))$, respectively, satisfying Condition ($\ast\ast$). Then $(C,D)$ is an LCP of quasi-twisted codes if and only if 
	\begin{enumerate}[(I)]
       \item  $\gcd(f_{11}(x),g_{11}(x))=1$.
		\item  $g_{11}(x)g_{22}(x)f_{11}(x)f_{22}(x)=(x^m-\lambda)^2$.
		\item $\gcd(g_{22}'(x),f_{22}'(x), g_{11}(x)f_{12}(x)-g_{12}(x)f_{11}(x))=1$.
	\end{enumerate}
\end{theorem}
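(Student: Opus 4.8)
The plan is to reduce Theorem \ref{qtslcp} to the quasi-cyclic case Theorem \ref{qslcp} by observing that the entire development in Section \ref{secgen2lcp} depends on $x^m-1$ only through the fact that it is a product of distinct irreducible polynomials (equivalently, that $\gcd(q,m)=1$ guarantees no repeated roots), a property that $x^m-\lambda$ shares under the standing hypothesis $\gcd(q,m)=1$. Concretely, I would first note that Theorem \ref{qt2} and Remark \ref{remarkqtgcd} play, for $\lambda$-quasi-twisted codes, exactly the role that Theorem \ref{qc2} and Remark \ref{remarkgcd} play for quasi-cyclic codes: the generating matrix is upper triangular with $g_{11}(x), g_{22}(x) \mid (x^m-\lambda)$, $\deg g_{12} < \deg g_{22}$, and $\gcd(g_{11},g_{22}) \mid g_{12}$. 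The CRT decomposition $R_\lambda \cong \bigoplus_{i=1}^{s} \mathbb{F}_q[x]/\langle f_i(x)\rangle$ and the constituent description of $C$ are identical in form to the $\lambda=1$ case, and Theorem \ref{qtclcp} gives the constituent-wise criterion for an LCP.

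Next I would carry out the proof of each of the four conditions by transcribing the corresponding lemma (Lemmata \ref{nI}--\ref{nIV} and Lemma \ref{converse}) with $x^m-1$ replaced by $x^m-\lambda$ throughout. The key structural facts used in those lemmata are: (1) each irreducible factor $f_i$ of $x^m-\lambda$ appears with multiplicity one, so divisibility statements translate cleanly into non-vanishing statements at the root $\xi^{u_i}$ of $f_i$ (here $\xi^{u_i}$ is a root of $f_i$ in an extension field, now a root of $x^m-\lambda$ rather than a root of unity, but this change is cosmetic); (2) for each $i$, $f_i$ divides exactly one of the four coprime factors $g(x), g_{11}'(x), g_{22}'(x), f(x)$ of $x^m-\lambda$ — and the identity $x^m-\lambda = g(x)g_{11}'(x)g_{22}'(x)\,l(x) = f(x)f_{11}'(x)f_{22}'(x)\,h(x)$ holds verbatim with $l(x) = (x^m-\lambda)/\mathrm{lcm}(g_{11},g_{22})$ and $h(x) = (x^m-\lambda)/\mathrm{lcm}(f_{11},f_{22})$; and (3) the $2\times 2$ constituent matrices $G_i, H_i$ and the determinant condition $g_{11}(\xi^{u_i})f_{12}(\xi^{u_i}) - g_{12}(\xi^{u_i})f_{11}(\xi^{u_i}) \ne 0$ controlling $C_i \cap D_i = \{0\}$ are literally the same expressions. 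So in the write-up I would state the four analogues of the lemmata, remark that each proof is obtained from the quasi-cyclic proof by the substitution $x^m-1 \rightsquigarrow x^m-\lambda$, and then conclude exactly as in the Proof of Theorem \ref{qslcp}.

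The only point requiring a little care — and the one I would expect a referee to scrutinize — is that Theorem \ref{qslcp} and the lemmata it rests on were proved assuming $\gcd(q,m)=1$ (used via ``$x^m-1$ has no multiple roots''), so I must make explicit that this hypothesis is in force here too; the paper's standing assumption $\gcd(m,q)=1$ covers this. One should also confirm that $\lambda \in \mathbb{F}_q^*$ is needed only to make $R_\lambda$ well-defined and the shift operator invertible, not anywhere in the divisibility bookkeeping. Beyond that, there is no genuine obstacle: the argument is a faithful parallel of the $\lambda=1$ development, with Theorem \ref{qt2}, Remark \ref{remarkqtgcd}, and Theorem \ref{qtclcp} supplying the three ingredients (triangular generators, gcd reformulation, constituent-wise LCP criterion) that Theorem \ref{qc2}, Remark \ref{remarkgcd}, and Theorem \ref{qclcp} supplied before. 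Accordingly the proof will be short: set up the notation $g(x), f(x), g_{ij}', f_{ij}'$ and the factorization of $x^m-\lambda$ as in the quasi-cyclic section, invoke Theorem \ref{qtclcp}, and reproduce the case analysis of Lemma \ref{converse} together with the four necessity lemmata, noting the verbatim correspondence.
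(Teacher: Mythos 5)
Your proposal is correct and follows essentially the same route as the paper, which simply observes that the proof of Theorem \ref{qslcp} carries over verbatim with $x^m-1$ replaced by $x^m-\lambda$, using Theorem \ref{qt2}, Remark \ref{remarkqtgcd}, and Theorem \ref{qtclcp} in place of their quasi-cyclic counterparts. Your added care about the standing hypothesis $\gcd(q,m)=1$ ensuring $x^m-\lambda$ is squarefree is exactly the point the paper relies on implicitly.
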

\begin{proof}
    The proof is similar to that for  Theorem \ref{qslcp}.
\end{proof}

\begin{theorem}\label{gen1qtlcp}
Let $C$ and $D$ be one-generator $\lambda$-quasi-twisted codes generated by $(g_{11}(x),g_{12}(x))$, and $(f_{11}(x),f_{12}(x))$ respectively, where $g_{11}(x),f_{11}(x) \mid (x^m-\lambda)$. Then $(C,D)$ is an LCP of codes if and only if $\gcd (x^m-\lambda, g_{11}(x)f_{12}(x)-g_{12}(x)f_{11}(x))=1$.
\end{theorem}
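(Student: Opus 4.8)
The plan is to reduce Theorem \ref{gen1qtlcp} to Theorem \ref{qtslcp} in exactly the same way that Theorem \ref{gen1lcp} was reduced to Theorem \ref{qslcp} in the quasi-cyclic case, with $x^m-1$ replaced throughout by $x^m-\lambda$. First I would apply the $\lambda$-analogue of Lemma \ref{lemma1gen}: since $\gcd(q,m)=1$, the polynomial $x^m-\lambda$ has no repeated roots, and the Gr\"obner-basis argument of \cite{lally2001} (valid for arbitrary $\lambda$, as noted in the proof of Theorem \ref{qt2}) shows that the one-generator code $C=\langle(g_{11}(x),g_{12}(x))\rangle$ can be rewritten with the two generators $(g_{11}(x),g_{12}(x)\bmod g_{22}(x))$ and $(0,g_{22}(x))$, where $g_{22}(x)=(x^m-\lambda)/g_{11}'(x)$ and $g(x)=\gcd(g_{11}(x),g_{12}(x))$, $g_{11}(x)=g(x)g_{11}'(x)$. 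The same applies to $D$ with $f_{22}(x)=(x^m-\lambda)/f_{11}'(x)$. As in the quasi-cyclic discussion, $(0,g_{22}(x))\in C$ so replacing $g_{12}$ by its reduction mod $g_{22}$ does not change the code, and the key identity $\mathrm{lcm}(g_{11}(x),g_{22}(x)) = g(x)g_{11}'(x)(x^m-\lambda)/(g(x)g_{11}'(x)) = x^m-\lambda$ holds (the exact analogue of Eq. \eqref{1geneq}), and likewise $\mathrm{lcm}(f_{11}(x),f_{22}(x))=x^m-\lambda$.

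Next I would feed these generators into Theorem \ref{qtslcp}. Conditions (II) and (III) of Theorem \ref{qtslcp} become $g(x)=(x^m-\lambda)/\mathrm{lcm}(f_{11},f_{22})=(x^m-\lambda)/(x^m-\lambda)=1$ and $f(x)=1$, which are precisely (A) and (B). Condition (IV) becomes
\[
\gcd\!\left(\tfrac{x^m-\lambda}{g_{11}(x)},\ \tfrac{x^m-\lambda}{f_{11}(x)},\ g_{11}(x)f_{12}(x)-g_{12}(x)f_{11}(x)\right)=1
\]
(using $g_{22}'(x)=g_{22}(x)=(x^m-\lambda)/g_{11}'(x)$ together with $g(x)=1$, so that $g_{22}'(x)=(x^m-\lambda)/g_{11}(x)$, and similarly for $f$). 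I would then prove the $\lambda$-version of Lemma \ref{gen1lemma}: under the hypotheses $\gcd(g_{11},g_{12})=\gcd(f_{11},f_{12})=\gcd(f_{11},g_{11})=1$, this gcd equals $1$ if and only if $\gcd(x^m-\lambda,\ g_{11}(x)f_{12}(x)-g_{12}(x)f_{11}(x))=1$, which is (C). The proof is verbatim the one for Lemma \ref{gen1lemma}: any irreducible factor $f_i$ of $x^m-\lambda$ dividing $g_{11}f_{12}-g_{12}f_{11}$ cannot divide $g_{11}$ (else $f_i\mid g_{12}f_{11}$, forcing $f_i\mid g_{12}$ or $f_i\mid f_{11}$, both excluded) and symmetrically cannot divide $f_{11}$, so it survives in both $(x^m-\lambda)/g_{11}$ and $(x^m-\lambda)/f_{11}$.

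Finally I would handle condition (I) of Theorem \ref{qtslcp}, $\gcd(f_{11}(x),g_{11}(x))=1$, by showing it is subsumed by (A), (B), (C): if some irreducible $f_i\mid\gcd(g_{11},f_{11})$, then since $\gcd(g_{11},g_{12})=1$ we have $f_i\nmid g_{12}$, hence $f_i\mid g_{11}f_{12}-g_{12}f_{11}$ would force $f_i\mid g_{12}f_{11}$, and since $f_i\mid f_{11}$ this gives nothing — so instead argue directly: $f_i\mid g_{11}$ and $f_i\mid f_{11}$ and $f_i\mid x^m-\lambda$ imply $g_{11}(\xi^{u_i})=f_{11}(\xi^{u_i})=0$, so $g_{11}f_{12}-g_{12}f_{11}$ vanishes at $\xi^{u_i}$, contradicting (C). Putting the pieces together: (A),(B),(C) $\Leftrightarrow$ (I),(II),(III),(IV) $\Leftrightarrow$ $(C,D)$ is an LCP, by Theorem \ref{qtslcp}. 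I do not anticipate a genuine obstacle here — the content is entirely the bookkeeping of transporting the quasi-cyclic arguments of Section \ref{secgen1lcp} to the constacyclic modulus $x^m-\lambda$; the only point requiring a word of care is justifying that the structural results of \cite{lally2001} and Lemma \ref{lemma1gen} carry over to arbitrary $\lambda$, which is exactly the remark already made in the proof of Theorem \ref{qt2} and relies on $\gcd(q,m)=1$ ensuring $x^m-\lambda$ is squarefree.
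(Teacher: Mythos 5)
Your proposal is correct and follows essentially the same route as the paper, which proves this theorem by transporting the argument of Theorem \ref{gen1lcp} (via the $\lambda$-analogues of Lemma \ref{lemma1gen}, Eq.~\eqref{1geneq}, and Lemma \ref{gen1lemma}) to the modulus $x^m-\lambda$ and invoking Theorem \ref{qtslcp}. Your added care in justifying that the structural results of \cite{lally2001} and Lemma \ref{lemma1gen} remain valid for arbitrary $\lambda$ (using that $x^m-\lambda$ is squarefree when $\gcd(q,m)=1$) is exactly the point the paper delegates to the remark in the proof of Theorem \ref{qt2}.
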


\begin{proof}
    The proof is similar to that for Theorem \ref{gen1lcp}.
\end{proof}

\section{Examples of LCPs of codes  with good security parameters}\label{secexample}
 The parameters of a linear code are denoted by $[n,k,d]$, where $n$ is the length, $k$ is the dimension, and $d$ is the minimum Hamming distance of the code. For given $n,k $, we denote the security parameter of an LCP of codes $(C,D)$ by  $d_{LCP}(C,D)=\min\{d(C),d(D^{\perp})\}$, the optimal distance of linear codes by $d_{opt}$, that is, any $[n,k, >d_{opt}]$ code does not exist, and $d_{BKLC}$ denotes the minimum distance of the best known linear code. For the values of $d_{opt}$ and $d_{BKLC}$, we refer to Grassl's code table \cite{codetable}. 
 In \cite{carlet2019}, Carlet et al. proved that for $q>2$, there exists an LCP of linear codes with $d_{LCP}(C,D)=d_{opt}$, and for $q=2$, LCPs of linear codes exist with $d_{LCP}(C,D) \ge d_{opt}-1$.  In this section, we illustrate that the algebraic characterizations in the preceding sections can be utilized to construct LCPs of quasi-cyclic and quasi-twisted codes with $d_{LCP}(C,D)=d_{opt}$ ($q>2$), and $d_{LCP}(C,D)=d_{opt}-1$ or $d_{LCP}(C,D)=d_{opt}\  (q=2)$. 
 % by choosing appropriate generating polynomials.
 All computations are done in MAGMA \cite{magma}.

\begin{example}
 Let $q=5$ and $m=8$. Then $x^m-1=(x+1)(x+2)(x+3)(x+4)(x^2+2)(x^2+3)$  in $\mathbb{F}_5[x]$. Suppose $C$ is generated by
 \begin{equation*}
     \begin{split}
       g_{11}(x)&=(x+2),\\ 
       g_{12}(x)&=(x+2)((x^3 + x^2 + 1)),\\
     g_{22}(x)&=(x+1)(x+2)(x+4)(x^2+2)(x^2+3), 
     \end{split}
 \end{equation*}
  and $D$ is generated by 
  \begin{equation*}
     \begin{split}
       f_{11}(x)&=(x+3),\\ 
       f_{12}(x)&=(x+3)(4x^3 + x^2 + 1),\\
     f_{22}(x)&=(x+1)(x+3)(x+4)(x^2+2)(x^2+3).
     \end{split}
 \end{equation*}
 Then, observe that
 \begin{enumerate}
     \item $\gcd(g_{11}(x),f_{11}(x))=1$.
     \item $g_{11}(x)g_{22}(x)f_{11}(x)f_{22}(x)=(x^8-1)^2$.
     \item $\gcd(g_{22}'(x),f_{22}'(x),g_{11}(x)f_{12}(x)-g_{12}(x)f_{11}(x))=1$.     
 \end{enumerate}
 Thus, by Theorem \ref{qslcp}, $(C,D)$ is an LCP of quasi-cyclic codes with $n=16$, $k=8$, and $d_{LCP}(C,D)=7=d_{opt}$. Moreover, a cyclic code with these parameters does not exist.
\end{example}

\begin{example}
    Let $q=3$ and $m=4$. Suppose $C$ is generated by
 \begin{equation*}
     \begin{split}
       g_{11}(x)&=x+1,\\ 
       g_{12}(x)&=x^2+x+1,\\
        g_{22}(x)&=(x+2)(x^2+1),
     \end{split}
 \end{equation*}
  and $D$ is generated by 
  \begin{equation*}
     \begin{split}
       f_{11}(x)&=x+2,\\ 
       f_{12}(x)&=x^2+x+2,\\
     f_{22}(x)&=(x+1)(x^2+1).
     \end{split}
 \end{equation*}
 Then, by Theorem \ref{qslcp}, $(C,D)$ is an LCP of ternary quasi-cyclic codes with $n=8$, $k=4$, and $d_{LCP}(C,D)=4=d_{opt}$. 
\end{example}

\begin{example}
    Let $q=3$ and $m=5$. Suppose $C$ is generated by
 \begin{equation*}
     \begin{split}
       g_{11}(x)&=x+2,\\ 
       g_{12}(x)&=x^2+2x+2,\\
     g_{22}(x)&=x^4+x^3+x^2+x+1,
     \end{split}
 \end{equation*}
  and $D$ is generated by 
  \begin{equation*}
     \begin{split}
       f_{11}(x)&=1,\\ 
       f_{12}(x)&=x^3 + 2x^2 + 2x + 1,\\
       f_{22}(x)&=x^5-1.
     \end{split}
 \end{equation*}
 Then $(C,D)$ is an LCP of ternary quasi-cyclic codes with $n=10$, $k=5$ and $d_{LCP}(C,D)=5=d_{opt}$. This security parameter improves the security parameter obtained in  \cite[Table 3]{Aliabadi2023} with $n=10$ and $k=5$.
\end{example}
In Table \ref{tab:my_label}, we list some examples of LCPs of quasi-cyclic codes $(C,D)$ over $\mathbb{F}_4=\mathbb{F}_2[w]$, where the code $C$ is generated by $(g_{11}(x),g_{12}(x)), (0,g_{22}(x))$ and the code $D$ is generated by $(f_{11}(x),f_{12}(x)), (0,f_{22}(x))$. We compare the security parameter $d_{LCP}(C,D)$ with $d_{BKLC}$.

\begin{table}[h!]
    \centering
\small\begin{tabular}{|c|c|c|c|c|c|c|c|c|c|}
    \hline
     $n$&$k$  & $g_{11}(x)$& $g_{12}(x)$& $g_{22}(x)$& $f_{11}(x)$& $f_{12}(x)$& $f_{22}(x)$ & $d_{LCP}$ & $d_{opt}$  \\
    %    \hline
    % $6$ & $3$ & $x+w$ & $x+1 + w^2$ &\makecell{$x^2 + wx $\\$+ w^2$}& $1$&  \makecell{$wx^3+wx^2 $\\$+ w^2x + 1$} & $x^3-1$ & $3$ &$4$ \\
       \hline
       $10$ & $5$ & $x+1$ &\makecell{$x^3 + wx^2 $\\$+ wx$} &\makecell{$x^4 + x^3 $\\$+ x^2 + x $\\$+ 1$}& $1$& \makecell{$x^4 + wx^3 $\\$+ x + w$} & $x^5-1$ & $5$ &$5$ \\
       \hline
       $14$ & $9$ & $x+1$& \makecell{$x^3 + w^2x$\\ $+ w$} &\makecell{$x^4 + x^3$ \\ $+ x^2 + 1$}& \makecell{$x^3+x^2$\\$+1$}& \makecell{$x^3+x^2$\\$+1$ }& \makecell{$x^6 + x^5 $\\ $+ x^4 + x^3 $\\$+ x^2 + x + 1$} & $4$ &$4$ \\
       \hline
       % $18$& $9$ & \makecell{$x^2 + w^2x$\\$ + w$} & \makecell{$x^3 + w^2x^2 $\\$+ x + 1$} &\makecell{$x^7 + wx^6 $\\$+ x^4 + wx^3$\\$ + x + w$}& $x+w^2$& \makecell{$wx^4 + x^3$ \\$+ x^2 + x$\\$ + 1$} & \makecell{$x^8 + wx^7 $\\$+ w^2x^6 + x^5 $\\$+ wx^4 + w^2x^3 $\\$+ x^2 + wx + w^2$} & $6$ &$8$ \\
       % \hline
                
    \end{tabular}

    \caption{LCPs of quasi-cyclic codes over $\mathbb{F}_4$.}
    \label{tab:my_label}
\end{table}
Next, we provide some examples of LCPs of quasi-twisted codes with good security parameters. In the following example, we get the optimal security parameter over $\mathbb{F}_5$ with $n=22$ and $k=11$.
\begin{example}
    Let $q=5,$ $m=11$ and $\lambda=2$. Suppose $C$ is a $\lambda$-quasi-twisted code generated by 
     \begin{equation*}
     \begin{split}
       g_{11}(x)&=x+2,\\ 
       g_{12}(x)&=x^6 + 4x^5 + 2x^3 + 3x^2 + x + 3,\\
     g_{22}(x)&=(x^5 + x^4 + x^3 + 2x^2 + x + 2)(x^5 + 2x^4 + x^3 + 2x^2 + 3x + 2),
     \end{split}
 \end{equation*}
  and $D$ is an another $\lambda$-quasi-twisted code generated by 
  \begin{equation*}
     \begin{split}
       f_{11}(x)&=1,\\ 
       f_{12}(x)&=x^9 + 2x^8 + x^7 + 2x^6 + 4x^5 + 2x^4 + x^2 + 2x + 1,\\
       f_{22}(x)&=x^{11}-2.
     \end{split}
 \end{equation*}
 Then, by Theorem \ref{qtslcp}, $(C,D)$ is an LCP of $\lambda$-quasi-twisted quinary codes with $n=22$, $k=11$, and $d_{LCP}(C,D)=8=d_{BKLC}$. 
\end{example}

In \cite{Bouyuklieva2020}, the author showed that $d_{LCD}(C)\leq 7$ for $n=30$ and $k=15$ over $\mathbb{F}_2$, where $d_{LCD}(C)$ is the minimum distance of a binary LCD code $C$. In the following example, we show that there is an LCP of codes with $d_{LCP}(C,D)=7$ for the same $n$ and $k$.
%\textcolor{blue}{ Should we write it or not?} 

\begin{example}
	Let $q=2,$ $m=15$. Suppose $C$ is a quasi-cyclic code generated by 
	\begin{equation*}
		\begin{split}
			g_{11}(x)&=x^2 + x + 1,\\ 
			g_{12}(x)&=x^{12} + x^{11}+ x^9 + x^8 + x^7 + x^5+ x^4 + x^2,\\
			g_{22}(x)&=(x^4 + x^3 + 1)(x^4 + x^3 + x^2 + x + 1)(x^4 + x + 1)(x+1),
		\end{split}
	\end{equation*}
	and $D$ is an another quasi-cyclic code generated by 
	\begin{equation*}
		\begin{split}
			f_{11}(x)&=x + 1,\\ 
			f_{12}(x)&=x^{12} + x^{10}+ x^8 + x^7 + x^6+ x + 1,\\
			f_{22}(x)&=(x^4 + x^3 + x^2 + x + 1)(x^4 + x + 1)(x^4 + x^3 + 1)(x^2 + x + 1).
		\end{split}
	\end{equation*}
	Then, by Theorem \ref{qslcp}, $(C,D)$ is a binary LCP of quasi-cyclic codes with $d_{LCP}(C,D)=7=d_{opt}-1$, where $C$ is a binary $[30,15]$ code.
\end{example}

\begin{example}
	Let $q=2,$ $m=15$. Suppose $C$ is a quasi-cyclic code generated by 
	\begin{equation*}
		\begin{split}
			g_{11}(x)&=x+1,\\ 
			g_{12}(x)&=(x+1)(x^6+x^5+x^2+1),\\
			g_{22}(x)&=x^m-1,
		\end{split}
	\end{equation*}
	and $D$ is another quasi-cyclic code generated by 
	\begin{equation*}
		\begin{split}
			f_{11}(x)&= 1,\\ 
			f_{12}(x)&=x^7+x^5+x^4+x^3+1,\\
			f_{22}(x)&=(x^4 + x^3 + x^2 + x + 1)(x^4 + x + 1)(x^4 + x^3 + 1)(x^2 + x + 1).
		\end{split}
	\end{equation*}
	Then, by Theorem \ref{qslcp}, $(C,D)$ is a binary LCP of quasi-cyclic codes with $d_{LCP}(C,D)=8=d_{opt}$, where $C$ is a binary $[30,14]$ code. This value of $d_{LCP}(C,D)$ is better than the lower bound $d_{opt}-1$. 
\end{example}

\begin{example}
	Let $q=2,$ $m=21$. Suppose $C$ is a quasi-cyclic code generated by 
	\begin{equation*}
		\begin{split}
			g_{11}(x)&=(x+1)(x^3 + x^2 + 1)(x^6 + x^4 + x^2 + x + 1),\\ 
			g_{12}(x)&= (x + 1)^3(x^3 + x^2 + 1)^2(x^5 + x^2 + 1)(x^6 + x^4 + x^2 + x + 1),\\
			g_{22}(x)&=x^m-1,
		\end{split}
	\end{equation*}
and $D$ is another quasi-cyclic code generated by 
	\begin{equation*}
		\begin{split}
			f_{11}(x)&= 1,\\ 
			f_{12}(x)&= x^9 + x^6 + x^3 + x + 1,\\
			f_{22}(x)&= (x^2 + x + 1)(x^3 + x + 1)(x^6 + x^5 + x^4 + x^2 + 1).
		\end{split}
	\end{equation*}
	Then $(C,D)$ is a binary LCP of quasi-cyclic codes with $d_{LCP}(C,D)=16=d_{opt}$, where $C$ is a binary $[42,11,16]$ code. This value of $d_{LCP}(C,D)$ is better than the lower bound $d_{opt}-1$. 
\end{example}

\section*{Conclusion}
Linear complementary pairs (LCP) of codes generalize linear complementary dual (LCD) codes and have applications in resistance against side-channel and fault-injection attacks. In this work, we presented a polynomial characterization of LCPs of quasi-cyclic and quasi-twisted codes of index $2$. As a consequence, we provided necessary and sufficient conditions for LCPs of one-generator quasi-cyclic and quasi-twisted codes. 
In addition, we provided several examples of LCPs of quasi-cyclic and quasi-twisted codes with optimal security parameters. 
%\textcolor{blue}{An interesting direction for future research is to explicitly derive the security parameters of LCPs of quasi-cyclic codes in terms of their generating polynomials. Furthermore, extending this framework to obtain a complete polynomial characterization of linear complementary pairs of quasi-cyclic (quasi-twisted) codes for arbitrary index presents a significant theoretical challenge. The approach developed in this paper could, in principle, be adapted to this setting; however,  it would require handling a larger system of polynomials concurrently, and careful control of their algebraic relationships to guarantee complementary properties. The combinatorial and algebraic complexities grow rapidly with the index, making the necessary polynomial congruences, gcd conditions, and complementary constraints much more complicated.}
 
 %while polynomial characterizations exist for low indices (especially index 2), extending these methods to arbitrary index

\section*{Acknowledgment}

K. Abdukhalikov and G. K. Verma were supported by UAEU grants G00004233 and G00004614. 
D. Ho was supported by   the Tromsø Research Foundation (project “Pure Mathematics in
Norway”), and  UiT Aurora project MASCOT.
S. Ling is supported by Nanyang Technological University Research Grant No. 04INS000047C230GRT01.

\bibliographystyle{abbrv}
	\bibliography{ref}
\end{document}